\DeclareMathOperator*{\argmax}{arg\,max}
\newtheorem{theorem}{Theorem}
\newtheorem{lemma}{Lemma}
\newtheorem{corollary}{Corollary}
\def\d{{\rm d}}
\DeclarePairedDelimiterX\MeijerM[3]{\lparen}{\rparen}%
{\begin{smallmatrix}#1 \\ #2\end{smallmatrix}\delimsize\vert\,#3}
\newcommand\MeijerG[8][]{%
  G^{\,#2,#3}_{#4,#5}\MeijerM[#1]{#6}{#7}{#8}}
\newcommand\MeijerG*[7]{%
  G^{\,#1,#2}_{#3,#4}\MeijerM*{#5}{#6}{#7}}
\def\BibTeX{{\rm B\kern-.05em{\sc i\kern-.025em b}\kern-.08em
    T\kern-.1667em\lower.7ex\hbox{E}\kern-.125emX}}
\begin{document}

\title{ A Tractable Approach to Coverage Analysis in Downlink Satellite Networks
}


\author{\IEEEauthorblockN{ Jeonghun~Park},
{\it  Member,~IEEE},
\and
\IEEEauthorblockN{Jinseok~Choi},
{\it Member,~IEEE}\\
 \and 
 \IEEEauthorblockN{Namyoon~Lee},
 {\it Senior Member,~IEEE}
\thanks{ J. Park is with the School of Electronics Engineering, Kyungpook National University, South Korea (e-mail: {\texttt{jeonghun.park@knu.ac.kr}}). J. Choi is with Department of Electrical Engineering, Ulsan National Institute of Science and Technology, South Korea (e-mail: {\texttt{jinseokchoi@unist.ac.kr}}). N. Lee is with the School of Electrical Engineering, Korea University, South Korea (e-mail: {\texttt{namyoon@korea.ac.kr}})} \thanks{This work was supported by the National Research Foundation of Korea (NRF) grant funded by the Korea government (MSIT) (No. 2020R1C1C1013381) and Institute of Information and communications Technology Planning and Evaluation(IITP) grant funded by the Korea government(MSIT) (No.2021-0-00260,Research on LEO Inter-Satellite Links).}}
 
\maketitle

\begin{abstract} 
Satellite networks are promising to provide ubiquitous and high-capacity global wireless connectivity. Traditionally, satellite networks are modeled by placing satellites on a grid of multiple circular orbit geometries. Such a network model, however, requires intricate system-level simulations to evaluate coverage performance, and analytical understanding of the satellite network is limited. Continuing the success of stochastic geometry in a tractable analysis for terrestrial networks, in this paper, we develop novel models that are tractable for the coverage analysis of satellite networks using stochastic geometry. By modeling the locations of satellites and users using Poisson point processes on the surfaces of concentric spheres, we characterize analytical expressions for the coverage probability of a typical downlink user as a function of relevant parameters, including path-loss exponent, satellite height, density, and Nakagami fading parameter. Then, we also derive a tight lower bound of the coverage probability in tractable expression while keeping full generality. Leveraging the derived expression, we identify the optimal density of satellites in terms of the height and the path-loss exponent. Our key finding is that the optimal average number of satellites decreases logarithmically with the satellite height to maximize the coverage performance. Simulation results verify the exactness of the derived expressions. 

 \end{abstract}

\begin{IEEEkeywords}
Satellite networks, stochastic geometry, and coverage probability. 
\end{IEEEkeywords}



\section{Introduction}
Universal coverage is a persistent trend in wireless communications. Satellite networks integrated with terrestrial cellular networks can be promising solutions to provide seamless and ubiquitous connectivity services \cite{liu:commmag:21, giordani:network:21}. While classical satellite networks using geosynchronous equatorial orbit (GEO) are effective at providing stationary coverage to a specific area, the attention of researchers is recently shifting to satellite networks employing the low Earth orbit (LEO) or very LEO (VLEO) mega-satellite constellations. Unlike GEO satellite networks, LEO or VLEO satellite networks can achieve higher data rates with much lower delays at the cost of deploying more dense satellites to attain global coverage performance. For instance, various satellite network companies have recently been deploying about a few thousand VLEO and LEO satellites below 1000 km elevations to provide universal internet broadband services on Earth.

The characterization of the coverage and rate performance for VLEO satellite networks is of importance because of ultra-expensive costs for deploying mega VLEO satellites. Satellite networks are conventionally modeled by placing satellites on a grid of multiple circular orbit geometries, e.g., the Walker constellation. This model, however, is not very analytically tractable to characterize coverage and rate performance. For this reason, intricate system-level simulations are required to evaluate such performance by numerically averaging out the many sources of randomness, including satellites' locations and channel fading processes. 
Motivated by this, an analytical approach is indispensable as a complementary element to obtain insights on network design guidance.


 Stochastic geometry is a mathematical tool that characterizes the spatial distributions of base stations' and users' locations in wireless networks \cite{gilbert:josiam:61,baccelli:book:09,haenggi:tit:08}.  Significant progress has been made in recent years on characterizing wireless networks' coverage and rate performances using stochastic geometry. By modeling the placements of base stations and users according to Poisson point processes (PPPs), analytically tractable expressions for coverage and rates, which produce useful insights on network design guidance, have been found for ad-hoc \cite{baccelli:book:09,baccelli:tit:06,haenggi:tit:08,baccelli:jsac:09,haenggi:jsac:09}, cellular \cite{andrews:tcom:11, dhillon:jsac:12, lee:jsac:15}, multi-antenna \cite{lee:twc:15,huang:tit:13,park:twc:16, lee:tit:16}, mmWave \cite{bai:commmag:14,park:tccn:18,renzo:twc:15,atzeni:twc:18}, and UAVs \cite{chetlur:tcom:17,banagar:twc:20} networks. Continuing in the same spirit, in this work, we develop a tractable model for satellite downlink networks and characterize the coverage probability to illuminate the design principle of satellite downlink networks in terms of relevant network parameters. 
 

 
  
 \subsection{Related Work}



  
  Satellite network models based on stochastic geometry are recently obtaining momentum compared to deterministic grid models such as Walker constellations \cite{ganz:tcom:94,vatalaro:jsac:95,mokhtar:wcl:20,seyedi:commlett:12} because of their superiority in analytical tractability.  The most popular approach is to model the spatial distribution of satellites' locations according to a homogenous binomial point process (BPP) on a spherical shell. Assuming that a fixed number of satellites are uniformly distributed on the surface of a sphere, the analytical expressions for the coverage probability and the rate are derived in \cite{okati:tcom:20}. By modeling the locations of satellites according to multiple BPPs on the surfaces of concentric spheres,  \cite{talgat:commlett:20} characterizes the nearest-neighbor distance distributions depending on two different receivers' locations. Utilizing the derived distance distributions in \cite{talgat:commlett:20}, \cite{talgat:commlett:21} establishes an expression for the downlink coverage probability under a scenario where satellites play as relays between the users and the LEO satellites \cite{talgat:commlett:21}.  
  
%

  PPPs have also been used to model the spatial distribution of satellites'  placements in a portion on the surface of a sphere, i.e., a finite space.  Unlike the case of a BPP,  the number of satellites uniformly distributed over the spherical cap is modeled to follow a Poisson distribution. In \cite{hourani:wcl:21, hourani:wcl:21_2}, the downlink coverage probability of a satellite network is derived using the approximation of the contact angle distribution.  In \cite{okati:tcom:22},  the coverage and rate are analyzed for a noise-limited LEO satellite network by modeling satellites' locations according to a nonhomogeneous PPP with a non-uniform satellite intensity depending on latitudes.  These prior works in \cite{okati:tcom:20,okati:tcom:22,hourani:wcl:21, hourani:wcl:21_2,talgat:commlett:20,talgat:commlett:21} extend the realm of stochastic geometry into modeling satellite networks on the spherical cap. Notwithstanding the progress, the derived expressions for the coverage and rate are not very analytically tractable compared to the astonishingly simple expression on the coverage probability obtained in cellular networks that assumed an infinite two-dimensional space \cite{baccelli:book:09,baccelli:tit:06,haenggi:tit:08,baccelli:jsac:09,haenggi:jsac:09, andrews:tcom:11,dhillon:jsac:12,renzo:tcom:13,lee:jsac:15}. 
  
Applying a BPP or a PPP to model a wireless network in a finite space makes the performance analysis more challenging than using a PPP network in an infinite space  \cite{afshang:twc:17,haenggi:tit:05}. The difficulty primarily arises because the signal-to-interference-plus-noise ratio (SINR) depends on a receiver's location, i.e., a typical node analysis does not hold by boundary effects. In addition, the nearest distance distribution, which is a crucial instrument for the coverage probability, is less tractable than the counterpart of a PPP in an infinite space \cite{afshang:twc:17,haenggi:tit:05}. To be specific, when $N$ satellites are independent and identically distributed (IID) in a finite area (i.e., the surface of a sphere), the cumulative distribution function (CDF) of the nearest distribution is obtained by taking the $N$th power of the complementary CDF (CCDF) of the individual distance distribution as derived in \cite{okati:tcom:20,talgat:commlett:20}. As a result, this nearest distribution is unwieldy to compute the Laplace transform of the aggregated interference power. It can also be further complicated when marginalizing the effect of the number of satellites according to the Poisson distribution, making the closest distance distribution much less analytically tractable.  

\subsection{Contributions}
This paper puts forth a novel framework for evaluating satellite downlink networks in a finite space. In particular, we model the locations of satellites and users as independent and homogenous PPPs on the surfaces of spheres with radius $R_{\sf S}$ and $R_{\sf E}$, where $R_{\sf E}$ denotes the radius of the Earth.  Thanks to Slivnyak's theorem \cite{haenggi:tit:05}, we focus on a typical receiver located at (0,0, $R_{\sf E}$) on the surface of the Earth.  Then, a typical spherical cap is defined by a portion of the surface on the sphere in the field of view at the typical receiver's location.  The main contributions are summarized as follows:
\begin{itemize}
	\item As a stepping stone towards the computation of the coverage probability, we first derive a new nearest distance distribution from a typical receiver's location to the satellite when the satellites are distributed according to a PPP in a finite area, the surface of a sphere. Unlike the previous approaches in \cite{okati:tcom:20,okati:tcom:22,hourani:wcl:21, hourani:wcl:21_2,talgat:commlett:20,talgat:commlett:21}, we compute the nearest distance distribution conditioned that at least one satellite exists in the spherical cap seen by the typical receiver. It turns out that the derived conditional nearest distance distribution is a truncated \textit{Rayleigh distribution}, which is the time-honored nearest distance distribution for a PPP in an infinite space.  
 
	
	\item Harnessing a tractable form of the nearest distance distribution, we first establish an \textit{exact} expression for the coverage probability of a typical receiver when fading processes follow the Nakagami-$m$ distribution.  To accomplish this, we begin by deriving the coverage probability conditioned that a positive number of satellites are placed on the typical spherical cap in terms of system parameters, chiefly path-loss exponent, density, and the altitude of satellites. It is then marginalized to obtain the exact coverage probability. 
	\item We further derive a lower and upper bound of the coverage probability in tractable forms, while keeping full generality. We further derive a tight lower bound of the coverage probability in a tractable form, which illuminates the features of the system parameters in the coverage probability. 
	{\color{black}{The analytical coverage probabilities are verified by comparing to the coverage probability drawn by using actual Starlink satellite constellations data sets.  }}
	\item  We identify the optimal density of the satellites for the typical user as a function of the satellite's height and the path-loss exponent. Our finding is that the optimal average number of satellites scales down as the altitude of satellites becomes higher. 

\end{itemize}

The remainder of the paper is organized as follows. Section II describes the proposed network and channel models with the performance metrics for the downlink coverage probability. In Section III, the analytical expressions for the coverage probabilities are derived. Section IV characterizes the optimal density of the satellites according to system parameters.  Section V provides simulations results to validate our analysis.  Section VI concludes the paper.

 \section{System Model}
 In this section, we explain the proposed network model and the performance metrics for the analysis of downlink satellite networks. 

\subsection{Network Model }

{\bf Surface of a sphere:} The surface of a sphere in $\mathbb{R}^3$ with center at the origin ${\bf 0}\in \mathbb{R}^3$ and fixed radius $R_{\sf S}$ is defined as
\begin{align}
	\mathbb{S}_{R_{\sf S}}^2=\{{\bf x}\in \mathbb{R}^3: \|{\bf x}\|_2=R_{\sf S}\}.
\end{align}  
A point vector ${\bf x}\in \mathbb{S}_{R_{\sf S}}^2$ can be represented as a pair of elevation and azimuth angles $0\leq \theta \leq 2\pi$ and  $0\leq \phi \leq 2\pi$ in a polar coordinate.

\begin{figure} 
    \centering 
    \includegraphics[width=0.85\columnwidth]{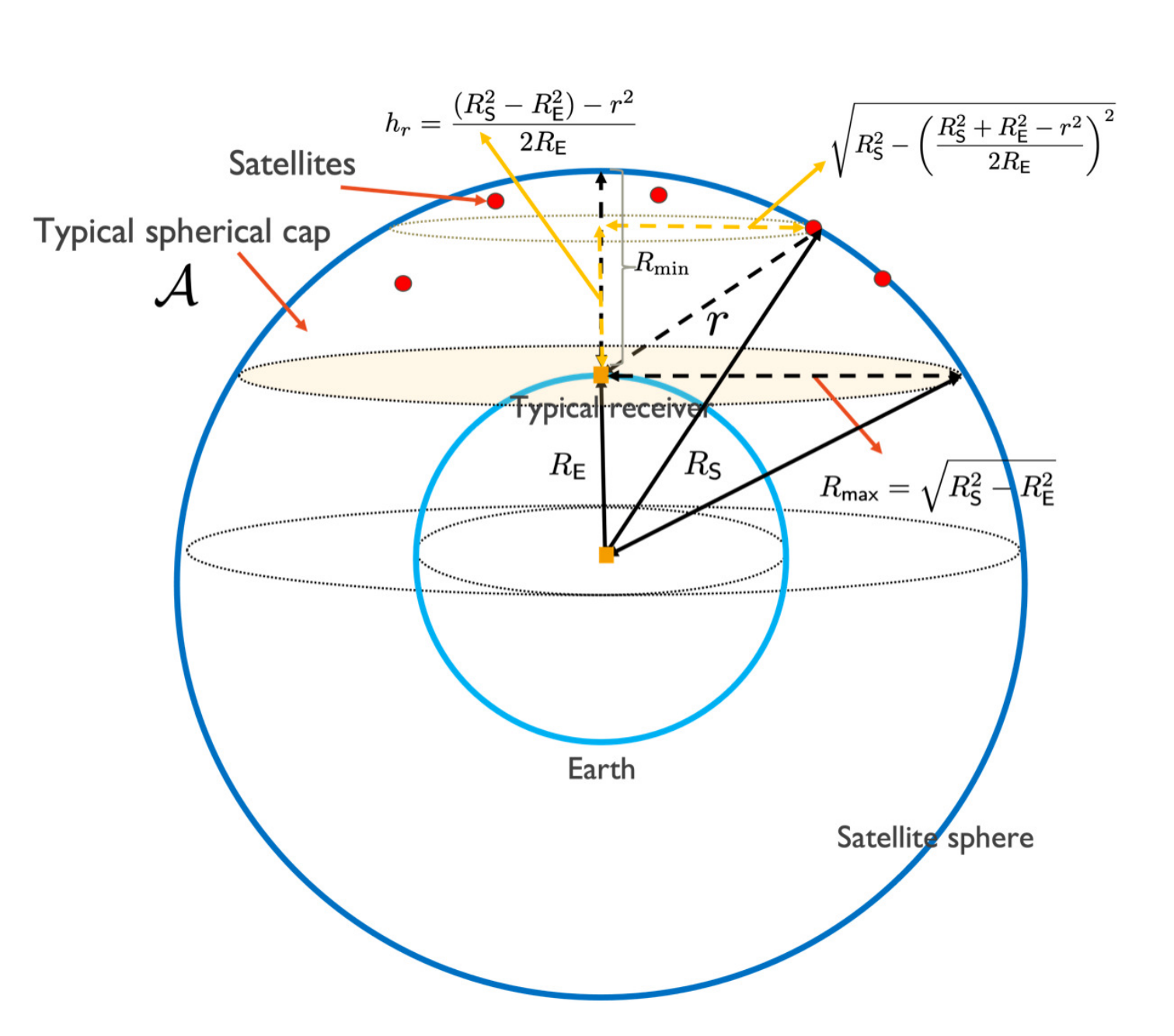}
    \caption{Geometry of satellite networks. Satellites are assumed to be placed on the surface of a sphere with radius $R_{\sf S}$. A typical receiver is located at $(0,0,R_{\sf E})$. The distance from the typical receiver to a satellite with distance $r$ is calculated in terms of geometry parameters of $R_{\sf S}$ and $R_{\sf E}$.}\label{fig:geo}
\end{figure}

{\bf  Poisson point process on the surface of a sphere:} Let $\Phi=\{{\bf x}_1,\ldots, {\bf x}_N\}$ be a point process consisted of a finite number elements on the surface of a sphere $\mathbb{S}_{R_{\sf S}}^2$.
$\Phi$ is said to a homogenous spherical Poisson point process (SPPP), provided that the number of points on $\mathbb{S}_{R_{\sf S}}^2$, $N=\Phi(\mathbb{S}_{R_{\sf S}}^2)$, follows Poisson random variable with mean $\lambda|\mathbb{S}_{R_{\sf S}}^2|=4\pi R_{\sf S}^2\lambda$, namely,
\begin{align}
	\mathbb{P}\left( N=n \right) =\exp\left(-4\pi R_{\sf S}^2\lambda\right)\frac{\left(4\pi R_{\sf S}^2\lambda\right)^n}{n!}, \label{eq:poisson}
\end{align} where $|\mathbb{S}_{R_{\sf S}}^2|=4\pi R_{\sf S}^2$ is the surface area of the sphere. For given $N$, the $\{{\bf x}_1,\ldots, {\bf x}_N\}$ forms a BPP, in which ${\bf x}_i$ for $i\in [N]$ is independent and uniformly distributed on the surface of the sphere.

{\bf Spatial distribution of satellites and users: }
 We assume that the satellites are located on the surface of the sphere with radius $R_{\sf S}$.  The locations of them are distributed according to a homogeneous SPPP with density $\lambda$, i.e., $\Phi=\{{\bf x}_1,\ldots, {\bf x}_{N}\}$, where $N$ follows a Poisson distribution with mean $4\lambda\pi R_{\sf S}^2 $.  Each satellite uses transmit power $P$.  We further consider a collection of downlink users. Let $\mathbb{S}_{R_{\sf E}}^2$ be the surface of the Earth with the radius $R_{\sf E}~ (<R_{\sf S})$. The locations of users on $\mathbb{S}_{R_{\sf E}}^2$ are assumed to be distributed according to a homogeneous SPPP with density $\lambda_{\sf U}$, i.e.,  $\Phi_{\sf U} = \{ {\bf u}_{1},  \ldots, {\bf u}_{M}\}$, where $M$ follows a Poisson distribution with mean $4\lambda_{\sf U}\pi R_{\sf E}^2 $.  We assume that $\Phi_{\sf U}$ is independent of underlying satellite placement processes $\Phi$.

{\bf Typical spherical cap:}
Since we assume that both $\Phi$ and $\Phi_{\sf U}$ are distributed according to the homogeneous SPPPs on $\mathbb{S}_{R_{\sf S}}^2$ and $\mathbb{S}_{R_{\sf E}}^2$, the statistical distribution of $\Phi$ with respect to any point in $\Phi_{\sf U}$ is invariant under rotation in $\mathbb{R}^3$. Thanks to the Slivnyak's theorem \cite{baccelli:book:09}, without loss of generality, we consider a typical user located at $\mathbb{S}_{R_{\sf E}}^2$ in $(0,0,R_{\sf E})$.  As depicted in Fig. \ref{fig:geo}, we define a typical spherical cap $\mathcal{A}\subset  \mathbb{S}_{R_{\sf S}}^2$ in the field of view at the typical receiver's location. In other words, the typical spherical cap is the portion of sphere $\mathbb{S}_{R_{\sf S}}^2$ cut off by a tangent plane to sphere $\mathbb{S}_{R_{\sf E}}^2$ (the Earth) centered at $(0,0,R_{\sf E})$.  Applying Archimedes' Hat-Box Theorem \cite{cundy:math:89}, the area of the typical spherical cap is given by\begin{align}
	|\mathcal{A}|=2\pi (R_{\sf S}-R_{\sf E})R_{\sf S}.\label{eq:area}
\end{align}
Also, we define a spherical cap that contains the points with smaller distance than $r$ from the typical receiver's location as 
\begin{align}
	\mathcal{A}_r=\{{\bf x}\in \mathbb{S}_{R_{\sf S}}^2: \|{\bf x}-(0,0,R_{\sf E})\|_2\leq r \}\subset \mathcal{A}.
\end{align} 
Without loss of generality, we focus on a downlink user performance in the typical spherical cap. 
{\color{black}{
We note that only the satellites located on the typical spherical cap $\mathcal{A}$ are visible to the typical user. Spatial distribution on these visible satellites is a PPP, whose density is $\lambda |\mathcal{A}|$. 
}}



\subsection{Path-Loss and Channel Models }
The propagation through the wireless channel is modeled by the combination of path-loss attenuation and small-scale fading.  We adopt the classical path-loss model, which is dependent on the distance from satellite $i\in[N]$ to the typical receiver, defined as 
\begin{align}
	  \|{\bf x}_{i}-{\bf u}_1\|^{-\alpha}.
\end{align}
This time-honored frequency-independent path-loss model may not be suitable for accurately capturing the effects of atmospheric absorption and rain fading. Notwithstanding such limitations, we shall focus on this model due to the analytical tractability and the path-loss exponent, effective beamforming gains, and small-scale fading parameters can be turned to incorporate these compound effects at a certain level of accuracy. 

We also model transmit and receive beamforming gains from  satellites to the typical receiver. Let $G_i$ be the effective antenna gain for the signal path from satellite $i$ to the receiver.  To make analysis tractable, we assume that the receive beam is perfectly aligned to the antenna boresight of the nearest serving satellite, while it is misaligned to those of interfering satellites. Then, the effective antenna gain is given by:
  \begin{align}
  	G_i =
\begin{cases}
G_1^{\rm t}G_1^{\rm r} \frac{c^2}{(4\pi f_c)^2},~~i=1,\\
G_i^{\rm t}G_1^{\rm r}\frac{c^2}{(4\pi f_c)^2},~~i\neq 1.
\end{cases}
  \end{align}
where $G_i^{\rm t}$ and $G_1^{\rm r}$ are the transmit antenna gain of satellite $i$ and the receive antenna gain of the typical receiver, $f_c$ denotes carrier frequency, and $c$ is the speed of light. It is worth mentioning that this antenna gain model arises from a two-lobe approximation of the antenna radiation pattern as in \cite{renzo:twc:15,bai:commmag:14}.  This two-lobe approximation can also be accurate when adopting Dolph-Chebyshev beamforming weights with uniform linear arrays \cite{koretz:tsp:09}.  Although simplified, this model has been widely used for its tractability while capturing primary features, including the directivity gain and the half-power beamwidth. We assume that $G_i$ is a constant for $i\in [N]/\{1\}$. Modeling the randomness on $G_i$ capturing the beam misalignment effect with more sophisticated distributions can be interesting future work.

We model the randomness of small-scale channel fading process using the Nakagami-$m$ distribution to suitably capture the line-of-sight (LOS) effects. Let $H_i$ be the fading from satellite $i$ to the typical receiver. Assuming $\mathbb{E}[H_i]=1$, the probability density function (PDF) of {\color{black}$\sqrt{H_i}$} is given by \cite{giunta:wcl:18}:
\begin{align} \label{eq:f_H}
	 f_{\color{black}\sqrt{H_i}}	(x)&= \frac{2m^m}{\Gamma(m)}x^{2m-1}\exp\left(-mx^2\right),
\end{align}
for $x\geq 0$. 
{\color{black}{In \eqref{eq:f_H}, the Gamma function is defined as $\Gamma(m) = (m-1)!$ for positive integer $m$.}} 
The Nakagami-$m$ distribution is versatile to model various small-scale fading processes. For instance, when $m=1$ and $m=\frac{(K+1)^2}{2K+1}$, it resorts to the Rayleigh and Rician-$K$ distributions, respectively. Further, by tuning its parameter $m$, it is possible to model the signal fading conditions spanning  from severe to moderate, while making  the distribution fit to empirically measured fading data sets. %

    
     \subsection{Performance Metric}
    We assume that the typical user is served by the nearest satellite. Let ${\bf x}_{1}\in \Phi$ be the location of the nearest satellite, the signal-to-interference-puls-noise (SINR) experienced by the typical receiver located at ${\bf u}_1=(0,0,R_{\sf E})$ is:
  \begin{align}
  	{\sf SINR}=\frac{G_1PH_1\|{\bf x}_{1}-{\bf u}_1\|^{-\alpha}}{\sum_{{\bf x}_i \in \Phi\cap\mathcal{A}/\{{\bf x}_{1}\}} G_iPH_i\|{\bf x}_{i}-{\bf u}_1\|^{-\alpha}+\sigma^2},
  \end{align} 
  where $\sigma^2$ denotes the noise power, $H_i$ denotes fading power from satellite $i$ to the receiver.     
     
 First, we characterize the coverage probability when there is at least one satellite exists in $\mathcal{A}$, i.e., $\Phi(\mathcal{A})>0$.  By conditioning on $\Phi(\mathcal{A})>0$, we compute the average of the coverage probabilities over all possible locations of satellites in the spherical cap area $\mathcal{A}$. Such conditional coverage probability is
\begin{align}
 & P^{\sf  cov}_{{\sf SINR}|\Phi(\mathcal{A})>0} (\gamma; \lambda, \alpha,R_{\sf S},m) \nonumber \\
 &=\mathbb{P}\left[{\sf SINR}\geq \gamma \mid~ \Phi(\mathcal{A})>0 \right]  \nonumber\\
&= \! \mathbb{P}\left[\!\frac{H_1\|{\bf x}_{1}-{\bf u}_1\|^{-\alpha}}{\sum_{{\bf x}_i \in \Phi\cap\mathcal{A}/\{{\bf x}_{1}\}} {\bar G}_{i}H_i\|{\bf x}_{i}\!-\!{\bf u}_1\|^{-\alpha}\!+\!{\bar \sigma}^2}\geq \gamma
\middle| \Phi(\mathcal{A})>0 \right]\!,\label{eq:cond_covp}
\end{align}
where ${\bar\sigma}^{2}=\frac{\sigma^2}{PG_1}$ and ${\bar G}_{i}=\frac{G_i}{G_1}<1$ are the normalized noise power and antenna gains. {\color{black}{We note that $\bar G = \bar G_i = \bar G_j$ for $i, j \neq 1$.}} This conditional coverage probability corresponds to the distribution of the SINR experienced by the typical receiver when any satellite BS exists in the spherical cap area $\mathcal{A}$.  This conditional coverage probability also serves as a stepping stone in the derivation of the coverage probability.

 By marginalizing over the binary distribution of $\mathbb{P}[\Phi(\mathcal{A})>0]$ and $\mathbb{P}[\Phi(\mathcal{A})=0]$, we obtain the coverage probability averaged over all possible satellite BSs' geometries, namely,
\begin{align}
 & P^{\sf  cov}_{\sf{SINR}} (\gamma; \lambda, \alpha,R_{\sf S},m) \nonumber \\
 & = P^{\sf  cov}_{{\sf{SINR}}|{\Phi}(\mathcal{A})>0} (\gamma; \lambda, \alpha,R_{\sf S},m) \mathbb{P}[\Phi(\mathcal{A})>0].\label{eq:covp}
\end{align}
This distribution takes the effect of the satellite availability at the typical receiver.

  	 \vspace{0.1cm}
{\bf Remark 1:} Our approach to compute the conditional coverage probability  $P^{\sf  cov}_{{\sf{SINR}}|\Phi(\mathcal{A})>0} (\gamma; \lambda, \alpha,R_{\sf S},m)$ differs from the existing method in \cite{okati:tcom:20}. In \cite{okati:tcom:20}, conditioned on $\Phi(\mathcal{A})=n$, the conditional coverage probability is first computed as
\begin{align}
	P^{\sf  cov}_{{\sf SINR}|\Phi(\mathcal{A})=n} (\gamma; \lambda, \alpha,R_{\sf S},m).
\end{align}  Then, by marginalizing $n$ according to the Poisson distribution in \eqref{eq:poisson}, the coverage probability in \eqref{eq:covp} can also be obtained as
\begin{align}
	 & P^{\sf  cov}_{\sf{SINR}} (\gamma; \lambda, \alpha,R_{\sf S},m) \nonumber \\
	 & =\sum_{n=0}^{\infty}P^{\sf  cov}_{{\sf SINR}|\Phi(\mathcal{A})=n} (\gamma; \lambda, \alpha,R_{\sf S},m) \mathbb{P}[\Phi(\mathcal{A})=n].
\end{align}
This computational approach makes the analysis difficult because the nearest satellite distance distribution is not tractable to integrate the Laplace of the aggregated interference power. Then, the infinite sum also introduces additional computational complexity.  Furthermore, our coverage probability computation differs from \cite{hourani:wcl:21, hourani:wcl:21_2}, in which the satellite spatial distributions are modeled by the PPPs in an infinite two-dimensional and use them with appropriate density matching in the spherical cap, i.e., a finite space. Although this approximation with the density matching can be tight in a dense satellite network, it becomes loose for a low-density satellite network. Our approach, however, is exact irrelevant to satellite densities. 

{\bf{Remark 2:}}
In our analysis, we implicitly assume uniform traffic demands on the earth. 
To capture non-uniform traffic demands, we need to model a user distribution as a non-homogeneous point process that breaks Slivnyak's theorem down. 
Different traffic demands can be captured into the analysis as a form of a load. To be specific, denoting that a user density for a particular region $\mathcal{C}$ as $\lambda_{\sf u}(\mathcal{C})$, the achievable rate per each user in $\mathcal{C}$ is characterized as
\begin{align} \label{eq:rev1:load}
    R = \frac{W_{}}{L(\mathcal{C})} \log \left(1 + {\text{SINR}}_{} \right).
\end{align} 
We note that $W$ is the operating bandwidth and $L(\mathcal{C})$ is the load of the corresponding region, i.e., the number of users associated with a satellite. The distribution of the load is approximately obtained as \cite{singh:twc:13}
\begin{align} \label{eq:rev1:load_dist}
    & \mathbb{P}[L= n] \nonumber \\
    & \simeq \frac{3.5^{3.5}}{n!} \frac{\Gamma(n+4.5)}{\Gamma(3.5)} \left( \frac{\lambda_{\sf u}(\mathcal{C}) }{\lambda} \right)^n \times \left(3.5 + \frac{\lambda_{\sf u}(\mathcal{C}) }{\lambda}  \right)^{-(n + 4.5)},
\end{align}
where $\lambda$ is the satellite density. If there are many users that want to communicate with satellites, i.e., high traffic demands, we have large $\lambda_{\sf u}(\mathcal{C})$. Accordingly, the load $L(\mathcal{C})$ also increases; therefore, the effective bandwidth allocated to each user decreases. This fact results in decreasing the per-user rate.



\section{Coverage Probability Analysis }
This section provides an exact expression for the coverage probability and then derives an upper and lower bound of the coverage probability in analytically tractable forms. 

\subsection{Statistical Properties}

Before providing the general expressions for the coverage probability, we first introduce two vital statistical properties of PPPs on the surface of a sphere, which will be the foundation of our analysis in the sequel.

 \begin{lemma}[The probability of satellite-visibility]\label{lem1}
  The probability that any satellite is visible at the typical receiver is given by
\begin{align}
	\mathbb{P}\left[\Phi(\mathcal{A})>0\right]=1-\exp\left(-\lambda 2\pi(R_{\sf S}-R_{\sf E})R_{\sf S}  \right). 
\end{align}
 \end{lemma}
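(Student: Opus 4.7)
The plan is to exploit the defining property of a homogeneous spherical PPP together with the already-established area formula for the typical spherical cap. Since $\Phi$ is a homogeneous SPPP on $\mathbb{S}^2_{R_{\sf S}}$ with intensity $\lambda$, the restriction of $\Phi$ to any measurable subset $\mathcal{A} \subset \mathbb{S}^2_{R_{\sf S}}$ has a Poisson-distributed count with mean $\lambda |\mathcal{A}|$. Therefore the event $\{\Phi(\mathcal{A}) > 0\}$ is just the complement of the void event, whose probability is the exponential of the negative mean.

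First, I would invoke the area expression for the typical spherical cap given in equation \eqref{eq:area}, namely $|\mathcal{A}| = 2\pi(R_{\sf S}-R_{\sf E})R_{\sf S}$, which follows from Archimedes' Hat-Box Theorem applied to the spherical cap that is the field of view at the typical receiver located at $(0,0,R_{\sf E})$. Second, I would substitute this area into the Poisson probability mass function to obtain
\begin{align*}
\mathbb{P}\!\left[\Phi(\mathcal{A}) = 0\right] = \exp\!\left(-\lambda |\mathcal{A}|\right) = \exp\!\left(-\lambda 2\pi(R_{\sf S}-R_{\sf E})R_{\sf S}\right).
\end{align*}
Taking the complement yields the claimed expression.

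There is essentially no technical obstacle here: the statement is a direct consequence of the void probability of a Poisson process and the closed-form cap area. The only point that deserves a brief justification is why it is legitimate to restrict attention to $\mathcal{A}$ (rather than the whole sphere): this is because only satellites lying on the cap $\mathcal{A}$ are geometrically visible from $(0,0,R_{\sf E})$, as already observed in the system-model section. Hence, the proof will be at most a few lines, consisting of citing equation \eqref{eq:area}, applying the Poisson void-probability formula in \eqref{eq:poisson} with the cap area, and taking the complement.
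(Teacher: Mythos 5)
Your proposal is correct and follows essentially the same route as the paper: compute the void probability $\mathbb{P}[\Phi(\mathcal{A})=0]=\exp(-\lambda|\mathcal{A}|)$ using the cap area $|\mathcal{A}|=2\pi(R_{\sf S}-R_{\sf E})R_{\sf S}$ from Archimedes' Hat-Box Theorem, and take the complement. No gaps.
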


 \begin{proof}
 Since the locations of satellites follow PPPs on the surface of a sphere, it is sufficient to compute the probability that there is no satellite existing in a typical spherical cap $\mathcal{A}$, namely,
 \begin{align}
 		\mathbb{P}\left[\Phi(\mathcal{A})=0\right]&=\exp\left(-\lambda(|\mathcal{A}|)\right) \nonumber\\
 		&=\exp\left(-\lambda 2\pi(R_{\sf S}-R_{\sf E})R_{\sf S}  \right),\label{eq:void2}
 \end{align}
 where $|\mathcal{A}|= 2\pi(R_{\sf S}-R_{\sf E})R_{\sf S}$ is the Lebesgue measure of set $\mathcal{A}$ as in \eqref{eq:area}. 
 \end{proof}

 \begin{lemma}[The conditional nearest satellite distance distribution]\label{lem2}
 	 Let $R=\min_{{\bf x}_i\in \Phi \cap\mathcal{A}}\|{\bf x}_i-{\bf u}_1\|_2$ be the nearest distance from the typical user's location ${\bf u}_1=(0,0,R_{\sf E})$ to a satellite in $\Phi \cap \mathcal{A}$. Then, the PDF of $R$ is  
\begin{align}
 	f_{R|\Phi(\mathcal{A})>0}(r) =\!\begin{cases}\!
\nu(\lambda,R_{\sf S})re^{-\lambda \pi \frac{R_{\sf S}}{R_{\sf E}}r^2}  &{\rm for}~R_{\rm min}\leq r \leq R_{\rm max},\\
0& {\rm otherwise},
\end{cases} 
 \end{align} 
where $\nu(\lambda,R_{\sf S})=2\pi \lambda \frac{R_{\sf S}}{R_{\sf E}}\frac{ e^{\lambda \pi \frac{R_{\sf S}}{R_{\sf E}} \left(R_{\sf S}^2-R_{\sf E}^2\right) }}{e^{2\lambda \pi R_{\sf S}(R_{\sf S}-R_{\sf E})}-1}$, $R_{\rm min}=R_{\sf S}-R_{\sf E}$, and $R_{\rm max}=\sqrt{R_{\sf S}^2-R_{\sf E}^2}$.
  \end{lemma}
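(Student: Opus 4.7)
The plan is to obtain the conditional CDF of $R$ first via the void probability of a PPP, and then differentiate to produce the PDF. The key observation is that the event $\{R>r\}$ is equivalent to the event that no satellite falls inside the smaller cap $\mathcal{A}_r\subset\mathcal{A}$ (the intersection of $\mathbb{S}^2_{R_{\sf S}}$ with the ball of radius $r$ centered at ${\bf u}_1$). Since the satellites form a homogeneous SPPP of density $\lambda$, we have $\mathbb{P}[\Phi(\mathcal{A}_r)=0]=\exp(-\lambda|\mathcal{A}_r|)$, so the whole lemma reduces to (i) computing $|\mathcal{A}_r|$ from the geometry in Fig.~\ref{fig:geo}, (ii) folding in the conditioning on $\Phi(\mathcal{A})>0$ using Lemma~\ref{lem1}, and (iii) taking a derivative.

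For step (i) I would place a satellite on $\mathbb{S}^2_{R_{\sf S}}$ at spherical angle $\theta$ measured from the axis through ${\bf u}_1=(0,0,R_{\sf E})$, apply the law of cosines to write $r^2=R_{\sf S}^2+R_{\sf E}^2-2R_{\sf S}R_{\sf E}\cos\theta$, and then invoke Archimedes' Hat-Box Theorem to express the cap area as $|\mathcal{A}_r|=2\pi R_{\sf S}\cdot h(r)$, where $h(r)=R_{\sf S}(1-\cos\theta)$ is the cap height. Eliminating $\cos\theta$ yields the clean expression
\begin{equation*}
|\mathcal{A}_r|=\pi\tfrac{R_{\sf S}}{R_{\sf E}}\bigl(r^2-(R_{\sf S}-R_{\sf E})^2\bigr),\qquad R_{\rm min}\leq r\leq R_{\rm max},
\end{equation*}
which is the crucial ingredient that produces the Rayleigh-like $e^{-\lambda\pi(R_{\sf S}/R_{\sf E})r^2}$ factor. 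A sanity check at $r=R_{\rm min}$ gives $|\mathcal{A}_r|=0$ and at $r=R_{\rm max}=\sqrt{R_{\sf S}^2-R_{\sf E}^2}$ it recovers $|\mathcal{A}|=2\pi R_{\sf S}(R_{\sf S}-R_{\sf E})$ from \eqref{eq:area}.

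For step (ii), on $R_{\rm min}\leq r\leq R_{\rm max}$ I would decompose $\mathcal{A}$ into the disjoint pieces $\mathcal{A}_r$ and $\mathcal{A}\setminus\mathcal{A}_r$ and use the independence of a PPP on disjoint Borel sets:
\begin{equation*}
\mathbb{P}[R>r,\,\Phi(\mathcal{A})>0]=e^{-\lambda|\mathcal{A}_r|}\bigl(1-e^{-\lambda(|\mathcal{A}|-|\mathcal{A}_r|)}\bigr)=e^{-\lambda|\mathcal{A}_r|}-e^{-\lambda|\mathcal{A}|}.
\end{equation*}
Dividing by $\mathbb{P}[\Phi(\mathcal{A})>0]=1-e^{-\lambda|\mathcal{A}|}$ from Lemma~\ref{lem1} and taking $-\tfrac{d}{dr}$ of the conditional CCDF gives
\begin{equation*}
f_{R|\Phi(\mathcal{A})>0}(r)=\frac{2\lambda\pi\tfrac{R_{\sf S}}{R_{\sf E}}\,r\,e^{-\lambda\pi\tfrac{R_{\sf S}}{R_{\sf E}}(r^2-(R_{\sf S}-R_{\sf E})^2)}}{1-e^{-2\lambda\pi R_{\sf S}(R_{\sf S}-R_{\sf E})}},
\end{equation*}
which, after pulling the $r$-independent exponential into the numerator and multiplying top and bottom by $e^{2\lambda\pi R_{\sf S}(R_{\sf S}-R_{\sf E})}$, matches $\nu(\lambda,R_{\sf S})\,r\,e^{-\lambda\pi(R_{\sf S}/R_{\sf E})r^2}$ once the identity $(R_{\sf S}-R_{\sf E})^2+2R_{\sf E}(R_{\sf S}-R_{\sf E})=R_{\sf S}^2-R_{\sf E}^2$ is used. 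Outside $[R_{\rm min},R_{\rm max}]$ the density vanishes because a satellite on $\mathbb{S}^2_{R_{\sf S}}$ cannot lie closer than $R_{\rm min}$ to ${\bf u}_1$ and cannot lie inside the visibility cap $\mathcal{A}$ at distance exceeding $R_{\rm max}$.

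The only delicate point is the geometric reduction in step (i): one must recognize that the locus of satellites at distance exactly $r$ from ${\bf u}_1$ is a circle on $\mathbb{S}^2_{R_{\sf S}}$ coaxial with the user direction, so that the sublevel set $\mathcal{A}_r$ really is a spherical cap and Archimedes' theorem applies cleanly. Everything else is a routine void-probability computation, and the characteristic feature --- that the conditional density collapses to a \emph{truncated Rayleigh} form --- is a direct consequence of $|\mathcal{A}_r|$ being affine in $r^2$.
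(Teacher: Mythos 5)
Your proposal is correct and follows essentially the same route as the paper's proof: void probability of the sub-cap $\mathcal{A}_r$, independence of the PPP on the disjoint pieces $\mathcal{A}_r$ and $\mathcal{A}\setminus\mathcal{A}_r$ to handle the conditioning on $\Phi(\mathcal{A})>0$, Archimedes' Hat-Box Theorem to get $|\mathcal{A}_r|=\pi\frac{R_{\sf S}}{R_{\sf E}}\bigl(r^2-(R_{\sf S}-R_{\sf E})^2\bigr)$, and differentiation of the conditional CCDF. Your algebraic simplification via $(R_{\sf S}-R_{\sf E})^2+2R_{\sf E}(R_{\sf S}-R_{\sf E})=R_{\sf S}^2-R_{\sf E}^2$ reproduces the paper's $\nu(\lambda,R_{\sf S})$ exactly.
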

 \proof 
See Appendix \ref{proof:lem2}.
\endproof
The derived conditional distribution of the nearest satellite distance is a key instrument for the coverage probability analysis. It is instructive to compare the derived nearest distance distribution with the time-horned nearest distance distribution for a PPP on $\mathbb{R}^2$ with the same density $\lambda$, i.e., the Rayleigh distribution, 
\begin{align}
	f_{R}(r)= 2\pi \lambda r e^{-\lambda \pi r^2},
\end{align}
for $0\leq r\leq \infty$. The derived nearest distribution can be interpreted as a truncated Rayleigh distribution on the finite support $r\in[R_{\rm min}, R_{\rm max}]$ by conditioning $\Phi(\mathcal{A})>0$. 

{\color{black}{
{\textbf{Remark 3}}: The obtained truncated Rayleigh distribution is reduced to the original Rayleigh distribution by expanding a finite support to an infinite space. Specifically, we assume that $R_{\sf S} = cR_{\sf E}$, wherein $R_{\sf S} \rightarrow \infty, c \rightarrow 1,  R_{\sf S} (1-c) \rightarrow 0, R_{\sf S}^2 (1-c^2) \rightarrow \infty$. Then we have $R_{\rm min} \rightarrow 0$ and $R_{\rm max} \rightarrow \infty$; thereby the support becomes infinite. Under these premises, the truncated Rayleigh distribution is 
\begin{align}
    & f_{R|\Phi(\mathcal{A})>0}(r) \nonumber \\
    &= 2\pi \lambda \frac{R_{\sf S}}{R_{\sf E}}\frac{ e^{\lambda \pi \frac{R_{\sf S}}{R_{\sf E}} \left(R_{\sf S}^2-R_{\sf E}^2\right) }}{e^{2\lambda \pi R_{\sf S}(R_{\sf S}-R_{\sf E})}-1} re^{-\lambda \pi \frac{R_{\sf S}}{R_{\sf E}}r^2} \nonumber \\
    & = 2\pi \lambda c \frac{ e^{\lambda \pi c R_{\sf S}^2\left(1 - c^2\right) }}{e^{\frac{2}{1+c}\lambda \pi R_{\sf S}^2(1-c^2)}-1} re^{-\lambda \pi c r^2} \\
    & \rightarrow 2\pi \lambda r e^{-\lambda \pi r^2},
\end{align}
which shows that the truncated Rayleigh distribution boils down to the original Rayleigh distribution by expanding the support to an infinite space. 
}}


  \begin{lemma} 
 \label{lem3}
{\color{black}{Define the aggregated interference power as
\begin{align}
    I_r = \sum_{{\bf x}_i \in \Phi\cap\mathcal{A}/\{{\bf x}_{1}\}} {\bar G}_{i}H_i\|{\bf x}_{i}\!-\!{\bf u}_1\|^{-\alpha}.
\end{align}
}}
Then, the conditional Laplace transform of the aggregated interference power is 
  \begin{align}
 & \mathcal{L}_{I_{r}|\Phi(\mathcal{A})>0}(s) \nonumber \\
 & = \exp \left(\!- \lambda   \pi  \frac{R_{\sf S}}{R_{\sf E}} \!   \left( \!\frac{{\bar G}_is}{m}\!\right)^{\!\frac{2}{\alpha}} \! \int_{\left(\! \frac{{\bar G}_is}{m}\!\right)^{-\frac{2}{\alpha}}r^2 }^{\left(\! \frac{{\bar G}_is}{m}\!\right)^{-\frac{2}{\alpha}}R_{\rm max}^2  } 1\!-\! \frac{1}{\left(1 \!+\!  u^{\!-\frac{\alpha}{2}} \right)^m}{\rm d} u   \!\right). \label{eq:Laplace}
\end{align}
\end{lemma}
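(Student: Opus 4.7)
The plan is to combine the probability generating functional (PGFL) of the homogeneous SPPP with the Gamma moment generating function of Nakagami-$m$ fading, and then to reduce the resulting surface integral to a one-dimensional radial integral via the spherical Jacobian already implicit in the proof of Lemma~\ref{lem2}.

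I would first condition on the nearest-satellite distance being $R=r$. By Slivnyak's theorem for $\Phi$ together with the independence property of the PPP on disjoint sets, the remaining (interfering) satellites form a homogeneous PPP of intensity $\lambda$ on $\mathcal{A}\setminus\mathcal{A}_r$. The PGFL, together with the IID fading powers $\{H_i\}$ and the common interferer antenna gain $\bar G_i$, then gives
\begin{align*}
\mathcal{L}_{I_r|\Phi(\mathcal{A})>0}(s) = \exp\!\left(\!-\lambda\!\int_{\mathcal{A}\setminus\mathcal{A}_r}\!\Bigl(1-\mathbb{E}_H\bigl[e^{-s\bar G_i H\|{\bf x}-{\bf u}_1\|^{-\alpha}}\bigr]\Bigr){\rm d}{\bf x}\right).
\end{align*}
Under the Nakagami-$m$ assumption with $\mathbb{E}[H]=1$, the fading power $H$ is Gamma$(m,1/m)$, so its MGF evaluated at $-s\bar G_i t^{-\alpha}$ equals $(1+s\bar G_i/(m t^\alpha))^{-m}$, where $t=\|{\bf x}-{\bf u}_1\|$; hence the integrand depends on ${\bf x}$ only through $t$.

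Next I would convert the surface integral to a radial one. The Archimedes Hat-Box argument underlying Lemma~\ref{lem2} pushes the uniform area measure on $\mathcal{A}$ forward through the map ${\bf x}\mapsto t$ to the measure $2\pi(R_{\sf S}/R_{\sf E})\,t\,{\rm d}t$ on the radial support $[R_{\min},R_{\max}]$ after integrating out the azimuth. Substituting yields
\begin{align*}
\mathcal{L}_{I_r|\Phi(\mathcal{A})>0}(s) = \exp\!\left(\!-2\pi\lambda\frac{R_{\sf S}}{R_{\sf E}}\!\int_{r}^{R_{\max}}\!\Bigl(1-\bigl(1+\tfrac{s\bar G_i}{m t^\alpha}\bigr)^{-m}\Bigr) t\,{\rm d}t\right).
\end{align*}

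Finally I would apply the substitution $u=(s\bar G_i/m)^{-2/\alpha}t^2$, under which $2t\,{\rm d}t=(s\bar G_i/m)^{2/\alpha}\,{\rm d}u$ and $s\bar G_i/(m t^\alpha)=u^{-\alpha/2}$; this rewrites the integrand as $1-(1+u^{-\alpha/2})^{-m}$ and rescales the limits to $(s\bar G_i/m)^{-2/\alpha}r^2$ and $(s\bar G_i/m)^{-2/\alpha}R_{\max}^2$, reproducing \eqref{eq:Laplace}. I expect the main obstacle to be the radial-measure step: carefully arguing that the planar-looking Jacobian $2\pi(R_{\sf S}/R_{\sf E})t$ holds uniformly over the entire interferer support $[r,R_{\max}]$ up to the horizon, and verifying that conditioning on $R=r$ both excises the serving satellite and restricts the residual PPP to $\mathcal{A}\setminus\mathcal{A}_r$ without altering its intensity $\lambda$.
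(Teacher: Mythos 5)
Your proposal is correct and follows essentially the same route as the paper's proof: PGFL of the PPP on $\mathcal{A}\setminus\mathcal{A}_r$, the Gamma moment generating function for the Nakagami-$m$ fading power, the radial measure $2\pi\lambda\frac{R_{\sf S}}{R_{\sf E}}t\,{\rm d}t$ obtained from $\frac{\partial|\mathcal{A}_r|}{\partial r}$ via Archimedes' Hat-Box Theorem, and the substitution $u=(s\bar G_i/m)^{-2/\alpha}t^2$. The "obstacle" you flag at the end is handled exactly as you anticipate, since Lemma~\ref{lem2}'s computation of $|\mathcal{A}_r|$ already gives the Jacobian uniformly on $[R_{\rm min},R_{\rm max}]$.
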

\proof 
See Appendix \ref{proof:lem3}.
\endproof

\subsection{Exact Expression}
The following theorem is our main technical result of this paper.
\begin{theorem}\label{Th1}
In the interference-limited regime, i.e., $I_r \gg {\bar\sigma}^{2}$, the coverage probability of the typical receiver is obtained as 
\begin{align}
&P^{ {\sf cov}}_{{\sf SIR}} (\gamma; \lambda, \alpha,R_{\sf S},m) \nonumber\\
&= \sum_{k=0}^{ m -1}\!\frac{ m^k\gamma^k (-1)^{k}}{k!} \nonumber \\
& \mathbb{E}\!\left[{r}^{\alpha k} \left.\!{\frac{\d^k\mathcal{L}_{{  I}_{r|\Phi(\mathcal{A})>0} }(s)}{\d s^k}} \right|_{s= m\gamma  {r}^{\alpha}}\!\! \!\!\left.\right| \Phi(\mathcal{A})\!>\!0 \right] \left(1- e^{-\lambda 2\pi(R_{\sf S}-R_{\sf E})R_{\sf S}  }\right)  \\
&= \sum_{k=0}^{ m -1}\!\frac{ m^k\gamma^k (-1)^{k}}{k!} \nonumber \\
& \int_{R_{\rm min}}^{R_{\rm max}} {r}^{\alpha k} \left.\!{\frac{\d^k\mathcal{L}_{{  I}_{r|\Phi(\mathcal{A})>0} }(s)}{\d s^k}} \right|_{s= m\gamma  {r}^{\alpha}}\!\! \!\! \nu(\lambda,R_{\sf S})re^{-\lambda \pi \frac{R_{\sf S}}{R_{\sf E}}r^2} {\rm d} r \nonumber \\
& \left(1- e^{-\lambda 2\pi(R_{\sf S}-R_{\sf E})R_{\sf S}  }\right), \label{eq:Th1} 
	\end{align}
where $\nu(\lambda, R_{\sf S})$ is defined in Lemma \ref{lem2} and $\mathcal{L}_{{  I}_{r|\Phi(\mathcal{A})>0} }(s)$ is obtained in Lemma \ref{lem3}.
\end{theorem}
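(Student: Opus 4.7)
The plan is to start from the definition of coverage probability and decompose it via the law of total probability conditioned on $\Phi(\mathcal{A})>0$ versus $\Phi(\mathcal{A})=0$. When no satellite is visible there is no desired signal, so the coverage probability in that regime is zero, leaving
\begin{align*}
P^{\sf cov}_{\sf SIR}(\gamma;\lambda,\alpha,R_{\sf S},m)
= P^{\sf cov}_{\sf SIR\,|\,\Phi(\mathcal{A})>0}(\gamma;\lambda,\alpha,R_{\sf S},m)\,\mathbb{P}[\Phi(\mathcal{A})>0],
\end{align*}
and Lemma \ref{lem1} immediately supplies the multiplicative factor $1-e^{-\lambda 2\pi(R_{\sf S}-R_{\sf E})R_{\sf S}}$ appearing in the statement. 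The remaining task is to evaluate the conditional coverage probability.

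Next, I would drop noise (interference-limited regime) and condition on the nearest-satellite distance $R=r$ and on the interference $I_r$. The ratio $H_1 r^{-\alpha}/I_r \ge \gamma$ is equivalent to $H_1 \ge m\gamma r^{\alpha} I_r/m$, and since $H_1$ is a Nakagami-$m$ power (i.e., Gamma with integer shape $m$ and mean $1$), its CCDF admits the finite-sum form
\begin{align*}
\mathbb{P}[H_1 \ge y] = e^{-my}\sum_{k=0}^{m-1}\frac{(my)^k}{k!}.
\end{align*}
Substituting $y = \gamma r^{\alpha} I_r$ and taking expectation over $I_r$ turns the $k$th term into $\frac{(m\gamma r^{\alpha})^k}{k!}\,\mathbb{E}[I_r^{k} e^{-m\gamma r^{\alpha} I_r}]$. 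The crucial step is then to recognize the identity $\mathbb{E}[I_r^{k}e^{-sI_r}] = (-1)^{k}\frac{\d^{k}}{\d s^{k}}\mathcal{L}_{I_r}(s)$, so that each term in the sum becomes a $k$th derivative of the Laplace transform of the (conditional) interference evaluated at $s=m\gamma r^{\alpha}$, giving exactly the summand structure displayed in \eqref{eq:Th1}.

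Finally, I would deconditioning on $r$ by integrating against the conditional nearest-distance PDF from Lemma \ref{lem2}, which has support $[R_{\min},R_{\max}]$ and density $\nu(\lambda,R_{\sf S})\, r\, e^{-\lambda\pi (R_{\sf S}/R_{\sf E})r^{2}}$; the conditional Laplace transform inside the integral is supplied by Lemma \ref{lem3}. Multiplying by the Lemma \ref{lem1} factor yields the claimed expression. The one subtle point worth verifying carefully is the independence used when separating the signal fading $H_{1}$ from the interference after conditioning on the serving-link distance: namely, that given $R=r$, the interfering satellites form a PPP on $\mathcal{A}\setminus\mathcal{A}_{r}$ independent of $H_{1}$, and that the nearest-distance PDF in Lemma \ref{lem2} correctly captures this conditioning. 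I expect this independence/conditioning step, together with swapping the sum and expectation and justifying the differentiation-under-the-expectation identity, to be the only non-mechanical part; the rest is algebraic bookkeeping of the Gamma-CCDF expansion and substitution of Lemmas \ref{lem1}--\ref{lem3}.
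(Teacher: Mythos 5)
Your proposal is correct and follows essentially the same route as the paper's own proof: the same decomposition over $\Phi(\mathcal{A})>0$ versus $\Phi(\mathcal{A})=0$, the same finite-sum Gamma CCDF for $H_1$, the identity $\mathbb{E}[I_r^{k}e^{-sI_r}]=(-1)^{k}\frac{\d^{k}\mathcal{L}_{I_r}(s)}{\d s^{k}}$, and deconditioning over the truncated Rayleigh density of Lemma~\ref{lem2} before multiplying by the visibility probability of Lemma~\ref{lem1}. No gaps.
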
 

\begin{proof}
	See Appendix \ref{proof:Th1}.
\end{proof}

In Fig.~\ref{Fig1-1}, we show that the analytical expressions obtained in Theorem \ref{Th1} exactly matches with the numerical coverage probability for all the cases of $m$. Although the coverage probability expression in Theorem \ref{Th1} is exact and holds full generality in all relevant system parameters, it is rather unrefined to understand how the relevant parameters impact the coverage probability.  The difficulty in deriving a more refined expression arises from computing the derivatives of the Laplace transform of the aggregated interference power, which also entails the integral. Then, the additional marginalization is required for the nearest distribution as per Lemma \ref{lem2}. 

{\color{black}{
{\textbf{Remark 4}}: 
We note that Theorem \ref{Th1} assumes the interference-limited regime, wherein the aggregate interference power is a dominant factor to determine the coverage probability. If we assume the noise-limited regime, i.e., ${\bar{\sigma}}^2 \gg I_r$, the coverage probability of the typical user is easily obtained as 
    \begin{align}
 &P^{ {\sf cov}}_{{\sf SNR}|\Phi(\mathcal{A})>0} (\gamma; \lambda, \alpha,R_{\sf S},m) \nonumber\\
&\stackrel{}{=}\nu(\lambda, R_{\sf S})\!\!\int_{R_{\rm min}}^{R_{\rm max}}  \sum_{k=0}^{ m -1}\frac{ m^k\gamma^k {r}^{\alpha k}}{k!} \bar \sigma^{2k} e^{-m r^{\alpha} \bar \sigma^2} \times     re^{-\lambda \pi \frac{R_{\sf S}}{R_{\sf E}}r^2} {\rm d} r.
\end{align}
In Fig.~\ref{Fig1-1}, we show that if satellites are densely deployed as $\lambda|\mathcal{A}|=10$, the SIR coverage and the SINR coverage are almost equal, meaning that the noise effect is negligible in the networks. This justifies the interference-limited assumption in Theorem \ref{Th1}. 
The used noise parameters are as follows. The noise spectral density: $-174 {\rm dBm}/{\rm Hz}$, the bandwidth: $10 {\rm MHz}$, the transmitted power of satellites: $10 {\rm W}$, the transmit antenna gain: $30 {\rm dBi}$. We also note that all the parameters are referred from \cite{okati:tcom:20}. 
}}

\begin{figure} 
    \centering 
    \includegraphics[width=1\columnwidth]{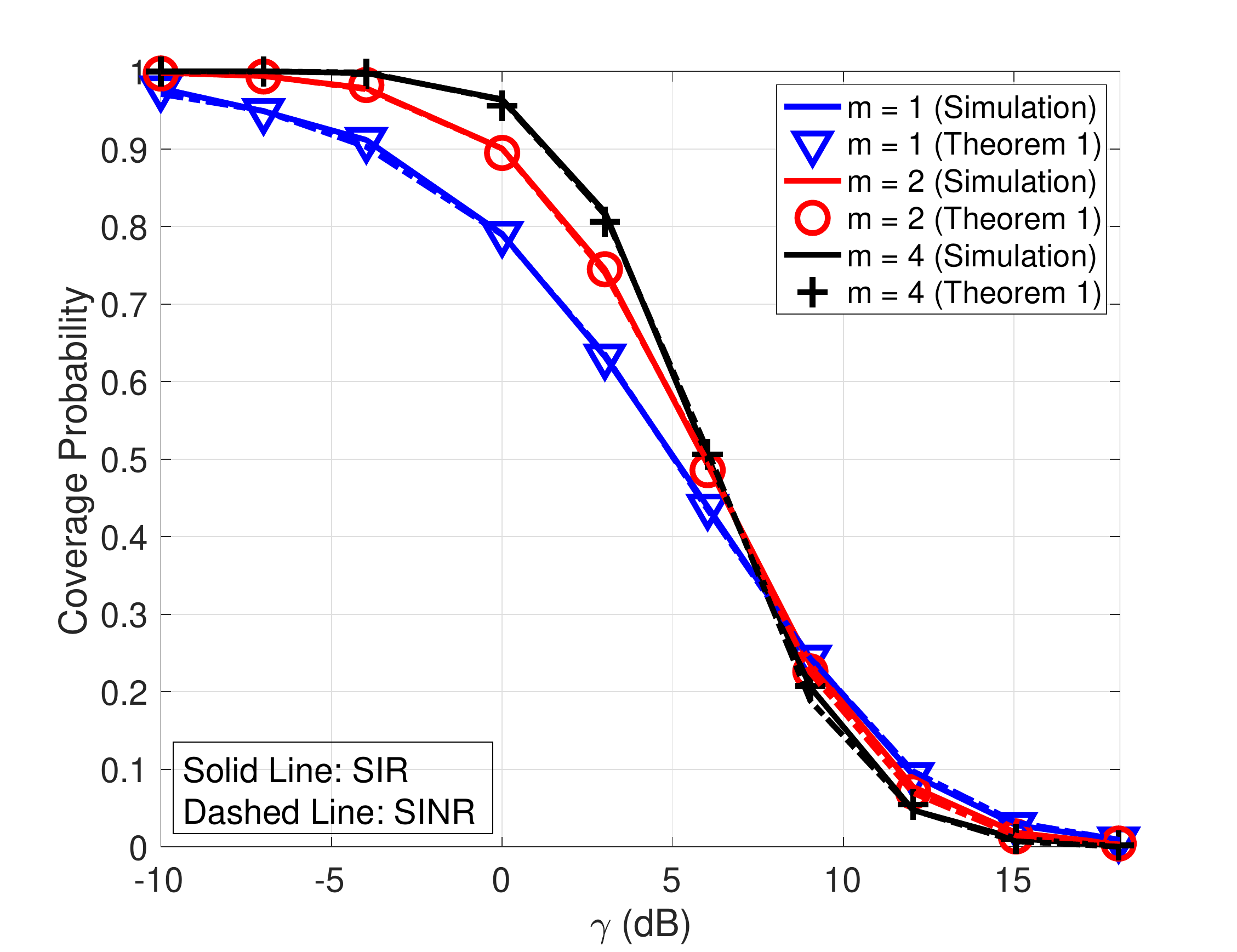}
    \caption{The coverage probability according to the Nakagami fading parameter $m\in \{1,2,4\}$ for fixed $\alpha=2$, ${\bar G}_i=0.1$, $\lambda|\mathcal{A}|=10$, $R_{\sf S}-R_{\sf E}=500$ km, and $m\in \{1,2,4\}$.  }\label{Fig1-1}
\end{figure}
  
\subsection{Upper and Lower Bounds}
To make a more compact characterization than the one in Theorem \ref{Th1}, we provide an upper and lower bound of the coverage probability expression without loss of generality.
 
  \begin{theorem}\label{Th2} 
In the interference-limited regime, the coverage probability is upper and lower bounded as 
  \begin{align}
  P^{{\sf{cov}}, {\rm{B}}}_{{\sf SIR}} (\gamma ;\lambda, \alpha,R_{\sf S},m,\kappa^{\rm L}) &	\leq P^{\sf cov}_{{\sf SIR}} (\gamma ;\lambda, \alpha,R_{\sf S},m) \nonumber\\
 P^{\sf cov}_{{\sf SIR}} (\gamma ;\lambda, \alpha,R_{\sf S},m)   &\leq  P^{{\sf{ cov}}, {\rm{B}} }_{{\sf SIR}} (\gamma ;\lambda, \alpha,R_{\sf S},m, \kappa^{\rm U}),
  \end{align}
where
\begin{align}
  &	P^{{\sf{ cov}}, {\rm{B}}}_{{\sf SIR}} (\gamma ;\lambda, \alpha,R_{\sf S},m,\kappa) \nonumber\\
 &= 2\pi \lambda \frac{R_{\sf S}}{R_{\sf E}}e^{\lambda \pi \frac{R_{\sf S}}{R_{\sf E}} \left(R_{\sf S}-R_{\sf E}\right)^2 }  \sum_{\ell=1}^{m} \binom{m}{\ell}(-1)^{\ell+1 } \nonumber \\
 & \int_{R_{\rm min}}^{R_{\rm max}} \! re^{-\lambda  \frac{R_{\sf S}}{R_{\sf E}} \pi\left[1+\eta(\ell m \kappa \gamma , r; \alpha, R_{\sf S}, m) \right]r^2}\! {\rm d} r, \label{eq:Th2}
 \end{align} 	
  with $\kappa\in [\kappa^{\sf U},\kappa^{\sf L}]$ for $ \kappa^{\sf L}=1$, $ \kappa^{\sf U}=m!^{-\frac{1}{m}}$. Also,  
\begin{align}
 \eta(x, r ;\alpha, R_{\sf S}, m) =\left(\!\frac{{\bar G_i}x}{m}\!\right)^{\!\frac{2}{\alpha}}\int_{\left(\!\frac{{\bar G}_i{ x}}{m}\!\right)^{\!\!-\frac{2}{\alpha}} }^{\left(\!\frac{{\bar G}_i{ x}}{m}\!\right)^{\!\!-\frac{2}{\alpha}}\!\left(\!\frac{R_{\rm max}}{r}\!\right)^2  }\!\! 1\!-\! \frac{1}{\left(1 \!+\!  u^{-\frac{\alpha}{2}} \right)^m}{\rm d} u.\label{eq:eta}
 \end{align}

    \end{theorem}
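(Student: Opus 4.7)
The plan is to bound the Gamma-distributed desired-link CCDF using Alzer's inequality and then propagate this bound through the Laplace transform and nearest-distance distribution already established in Lemmas~\ref{lem1}--\ref{lem3}. In the interference-limited regime, the conditional coverage probability equals
\begin{align*}
P^{\sf cov}_{{\sf SIR}|\Phi(\mathcal{A})>0}(\gamma) = \mathbb{E}_{r,I_r|\Phi(\mathcal{A})>0}\!\left[\mathbb{P}[H_1 > \gamma r^\alpha I_r \mid r, I_r]\right],
\end{align*}
where $\mathbb{P}[H_1>T] = \Gamma(m,mT)/\Gamma(m)$ because $H_1$ is a normalized Nakagami-$m$ power. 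For integer $m$ and $x>0$, Alzer's inequality states $(1-e^{-(m!)^{-1/m}x})^m \le \gamma(m,x)/\Gamma(m) \le (1-e^{-x})^m$. Taking complements with $x=mT$ and applying the binomial theorem to $1-(1-e^{-\kappa m T})^m$ yields the pointwise sandwich
\begin{align*}
&\sum_{\ell=1}^{m}\binom{m}{\ell}(-1)^{\ell+1} e^{-\kappa^{\rm L} m \ell T} \le \mathbb{P}[H_1 > T] \\
&\le \sum_{\ell=1}^{m}\binom{m}{\ell}(-1)^{\ell+1} e^{-\kappa^{\rm U} m \ell T},
\end{align*}
with $\kappa^{\rm L}=1$ and $\kappa^{\rm U}=(m!)^{-1/m}$, valid for every $T>0$ and in particular for $T=\gamma r^\alpha I_r$.

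Next, I would take the expectation over $I_r$ and exchange it with the finite sum; each term becomes $\mathcal{L}_{I_r|\Phi(\mathcal{A})>0}(\kappa m \ell \gamma r^\alpha)$ by Lemma~\ref{lem3}. Substituting $s = \kappa m \ell \gamma r^\alpha$ into that Laplace expression, the prefactor $(\bar G_i s/m)^{2/\alpha}$ collapses to $(\bar G_i \ell \kappa \gamma)^{2/\alpha} r^2$ and the integration limits become $(\bar G_i \ell \kappa \gamma)^{-2/\alpha}$ and $(\bar G_i \ell \kappa \gamma)^{-2/\alpha}(R_{\max}/r)^2$, so the whole exponent compresses to $-\lambda\pi(R_{\sf S}/R_{\sf E})\,\eta(\ell m \kappa \gamma, r;\alpha,R_{\sf S},m)\,r^2$ via definition \eqref{eq:eta}. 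Marginalizing over $r$ against $f_{R|\Phi(\mathcal{A})>0}(r) = \nu(\lambda,R_{\sf S})\, r\, e^{-\lambda\pi(R_{\sf S}/R_{\sf E}) r^2}$ from Lemma~\ref{lem2} on $[R_{\min}, R_{\max}]$ then fuses the two exponentials into $\exp(-\lambda\pi(R_{\sf S}/R_{\sf E})[1+\eta(\cdot)]r^2)$, giving exactly the integrand in \eqref{eq:Th2}.

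Finally, I would unconditionalize by multiplication with $\mathbb{P}[\Phi(\mathcal{A})>0]=1-e^{-2\lambda\pi R_{\sf S}(R_{\sf S}-R_{\sf E})}$ from Lemma~\ref{lem1} and combine the result with the $\nu(\lambda,R_{\sf S})$ prefactor. Using the factorization $R_{\sf S}^2-R_{\sf E}^2=(R_{\sf S}-R_{\sf E})(R_{\sf S}+R_{\sf E})$, the ratio $\nu(\lambda,R_{\sf S})\cdot(1-e^{-2\lambda\pi R_{\sf S}(R_{\sf S}-R_{\sf E})})$ telescopes to $2\pi\lambda(R_{\sf S}/R_{\sf E})\,e^{\lambda\pi(R_{\sf S}/R_{\sf E})(R_{\sf S}-R_{\sf E})^2}$, which is precisely the constant appearing in \eqref{eq:Th2}.

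The main obstacle is keeping the direction of Alzer's bound straight after the change of variable $x\mapsto mT$ and the passage to the complementary Gamma: since $1-(1-e^{-\kappa m T})^m$ is monotonically decreasing in $\kappa$, the larger value $\kappa^{\rm L}=1$ produces the smaller (lower) envelope while $\kappa^{\rm U}=(m!)^{-1/m}<1$ produces the larger (upper) envelope for $\mathbb{P}[H_1>T]$, and this pointwise inequality is preserved under the subsequent expectations over $I_r$ and $r$ by linearity. Once this bookkeeping is handled, the remaining steps are mechanical: substitution of $s$ into Lemma~\ref{lem3} and pattern-matching against the compact $\eta$ notation in \eqref{eq:eta} to recover \eqref{eq:Th2}.
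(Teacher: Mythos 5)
Your proposal is correct and follows essentially the same route as the paper's proof: Alzer's inequality on the normalized incomplete gamma CCDF of $H_1$, binomial expansion of $1-(1-e^{-\ell m\kappa T})^m$, interchange with the expectation over $I_r$ to land on the conditional Laplace transform of Lemma \ref{lem3} evaluated at $s=\ell m\kappa\gamma r^\alpha$, marginalization against the truncated Rayleigh density of Lemma \ref{lem2}, and de-conditioning via $\mathbb{P}[\Phi(\mathcal{A})>0]$. Your bookkeeping of the bound directions and the simplification of $\nu(\lambda,R_{\sf S})\cdot\mathbb{P}[\Phi(\mathcal{A})>0]$ to $2\pi\lambda\frac{R_{\sf S}}{R_{\sf E}}e^{\lambda\pi\frac{R_{\sf S}}{R_{\sf E}}(R_{\sf S}-R_{\sf E})^2}$ are both correct.
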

  \proof 
See Appendix \ref{proof:Th2}.
\endproof
  
   The coverage probability bounds in Theorem  \ref{Th2} are relatively tractable than the former in Theorem \ref{Th1}.  In particular, when $m=1$, i.e., the upper and lower bounds coincide because $\kappa^{\sf U}=1$ for $m=1$. By carefully tuning parameter $\kappa$ between $1$ and $m!^{-\frac{1}{m}}$ depending on the network parameters, we can obtain very tight approximation for the coverage probability as  \begin{align}
  &	P^{{\sf{ cov}}, {\rm{A}}}_{{\sf SIR}} (\gamma ;\lambda, \alpha,R_{\sf S},m,\kappa ) \nonumber\\
 &= 2\pi \lambda \frac{R_{\sf S}}{R_{\sf E}}e^{\lambda \pi \frac{R_{\sf S}}{R_{\sf E}} \left(R_{\sf S}-R_{\sf E}\right)^2 }  \sum_{\ell=1}^{m} \binom{m}{\ell}(-1)^{\ell+1 } \nonumber \\ 
 & \int_{R_{\rm min}}^{R_{\rm max}} \! re^{-\lambda  \frac{R_{\sf S}}{R_{\sf E}} \pi\left[1+\eta(\ell m \kappa  \gamma , r; \alpha, R_{\sf S}, m) \right]r^2}\! {\rm d} r, \label{eq:approx}
 \end{align} 	
  where $1\leq \kappa \leq (m!)^{-\frac{1}{m}}$. Also, the approximation coincides with the exact one in Theorem \ref{Th1}, which is stated in the following corollary.  
   \begin{corollary}\label{Cor1}
 Under the Rayleigh fading propagation, i.e., $m=1$, the approximation restores the exact coverage probability, i.e., 
 \begin{align}
 	 & P^{\sf cov}_{{\sf SIR}} (\gamma; \lambda, \alpha,R_{\sf S},1) \nonumber \\
 	 & = 2\pi \lambda \frac{R_{\sf S}}{R_{\sf E}}e^{\lambda \pi \frac{R_{\sf S}}{R_{\sf E}} \left(R_{\sf S}-R_{\sf E}\right)^2 }  \!\!\!\int_{R_{\rm min}}^{R_{\rm max}} \! \!re^{-\lambda  \frac{R_{\sf S}}{R_{\sf E}} \pi\left[1+\eta(\gamma , r  ; \lambda, \alpha, R_{\sf S}, 1 ) \right]r^2}\! {\rm d} r. \label{avgcovp1}
 	\end{align}
 	
\end{corollary}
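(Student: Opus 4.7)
The plan is a direct substitution: set $m=1$ in Theorem~\ref{Th1} and verify that the resulting expression collapses to \eqref{avgcovp1}. When $m=1$, only the $k=0$ term survives in the outer summation of Theorem~\ref{Th1}, so the coverage probability reduces to $\mathbb{E}\!\left[\mathcal{L}_{I_r|\Phi(\mathcal{A})>0}(\gamma r^\alpha)\mid \Phi(\mathcal{A})>0\right]\cdot \mathbb{P}[\Phi(\mathcal{A})>0]$. Using Lemma~\ref{lem2} to write the conditional expectation as an integral of $r$ against $\nu(\lambda,R_{\sf S})\, r\, e^{-\lambda\pi (R_{\sf S}/R_{\sf E}) r^2}$ on $[R_{\min}, R_{\max}]$, and Lemma~\ref{lem1} for the survival probability factor, the problem reduces to two independent algebraic simplifications.

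The first task is to combine the survival probability $1-e^{-2\lambda\pi R_{\sf S}(R_{\sf S}-R_{\sf E})}$ with $\nu(\lambda,R_{\sf S})$ into the clean prefactor $2\pi\lambda(R_{\sf S}/R_{\sf E})\, e^{\lambda\pi(R_{\sf S}/R_{\sf E})(R_{\sf S}-R_{\sf E})^2}$ appearing in \eqref{avgcovp1}. Plugging in the definition of $\nu$, the product equals $2\pi\lambda(R_{\sf S}/R_{\sf E})\,\exp\!\left(\lambda\pi (R_{\sf S}/R_{\sf E})(R_{\sf S}^2-R_{\sf E}^2)-2\lambda\pi R_{\sf S}(R_{\sf S}-R_{\sf E})\right)$. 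Using the identity $R_{\max}^2-R_{\min}^2 = 2R_{\sf E}(R_{\sf S}-R_{\sf E})$, which rewrites $2\lambda\pi R_{\sf S}(R_{\sf S}-R_{\sf E})$ as $\lambda\pi (R_{\sf S}/R_{\sf E})(R_{\max}^2-R_{\min}^2)$, the exponent telescopes to $\lambda\pi(R_{\sf S}/R_{\sf E}) R_{\min}^2 = \lambda\pi(R_{\sf S}/R_{\sf E})(R_{\sf S}-R_{\sf E})^2$, matching \eqref{avgcovp1}.

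The second task is to match the conditional Laplace transform with the $\eta$-factor in \eqref{avgcovp1}. Specializing Lemma~\ref{lem3} to $m=1$ and $s = \gamma r^\alpha$, the multiplicative prefactor inside the exponent becomes $(\bar G_i \gamma r^\alpha)^{2/\alpha} = (\bar G_i\gamma)^{2/\alpha} r^2$, and the integration limits become $(\bar G_i\gamma)^{-2/\alpha}$ and $(\bar G_i\gamma)^{-2/\alpha}(R_{\max}/r)^2$. Comparing directly with the definition of $\eta$ in \eqref{eq:eta} at $x=\gamma$, $m=1$, one reads off $\log \mathcal{L}_{I_r|\Phi(\mathcal{A})>0}(\gamma r^\alpha) = -\lambda\pi(R_{\sf S}/R_{\sf E})\,\eta(\gamma,r;\alpha,R_{\sf S},1)\,r^2$. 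Multiplying this by the $e^{-\lambda\pi(R_{\sf S}/R_{\sf E})r^2}$ factor from Lemma~\ref{lem2} yields exactly the integrand on the right-hand side of \eqref{avgcovp1}.

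An equivalent and even shorter route is to read the corollary off Theorem~\ref{Th2}: for $m=1$ one has $\kappa^{\sf U}=(m!)^{-1/m}=1=\kappa^{\sf L}$, so the upper and lower bounds coincide, and the sum over $\ell$ collapses to the single term $\binom{1}{1}(-1)^{2}=1$, leaving exactly the integrand in \eqref{avgcovp1}; a sandwich argument then forces equality with the exact coverage probability. Either way, the main obstacle is not conceptual but purely algebraic bookkeeping of the prefactor telescoping and the exponent normalization, and no technical device beyond Lemmas~\ref{lem1}--\ref{lem3} is required.
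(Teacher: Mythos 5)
Your proposal is correct and matches the paper's one-line proof, which simply sets $m=1$ so that $\kappa^{\sf L}=\kappa^{\sf U}=(1!)^{-1/1}=1$ and the Theorem~\ref{Th2} upper and lower bounds collapse onto the exact coverage probability --- precisely your ``shorter route.'' Your additional direct verification from Theorem~\ref{Th1} (only the $k=0$ term surviving, the prefactor telescoping $\lambda\pi\tfrac{R_{\sf S}}{R_{\sf E}}(R_{\sf S}^2-R_{\sf E}^2)-2\lambda\pi R_{\sf S}(R_{\sf S}-R_{\sf E})=\lambda\pi\tfrac{R_{\sf S}}{R_{\sf E}}(R_{\sf S}-R_{\sf E})^2$, and the identification of the Laplace exponent with $\eta(\gamma,r;\alpha,R_{\sf S},1)\,r^2$) is algebraically sound and simply supplies the bookkeeping the paper omits.
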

 \begin{proof}
 	The proof is direct by setting $m=1$.
\end{proof}

\begin{figure} 
    \centering 
    \includegraphics[width=1\columnwidth]{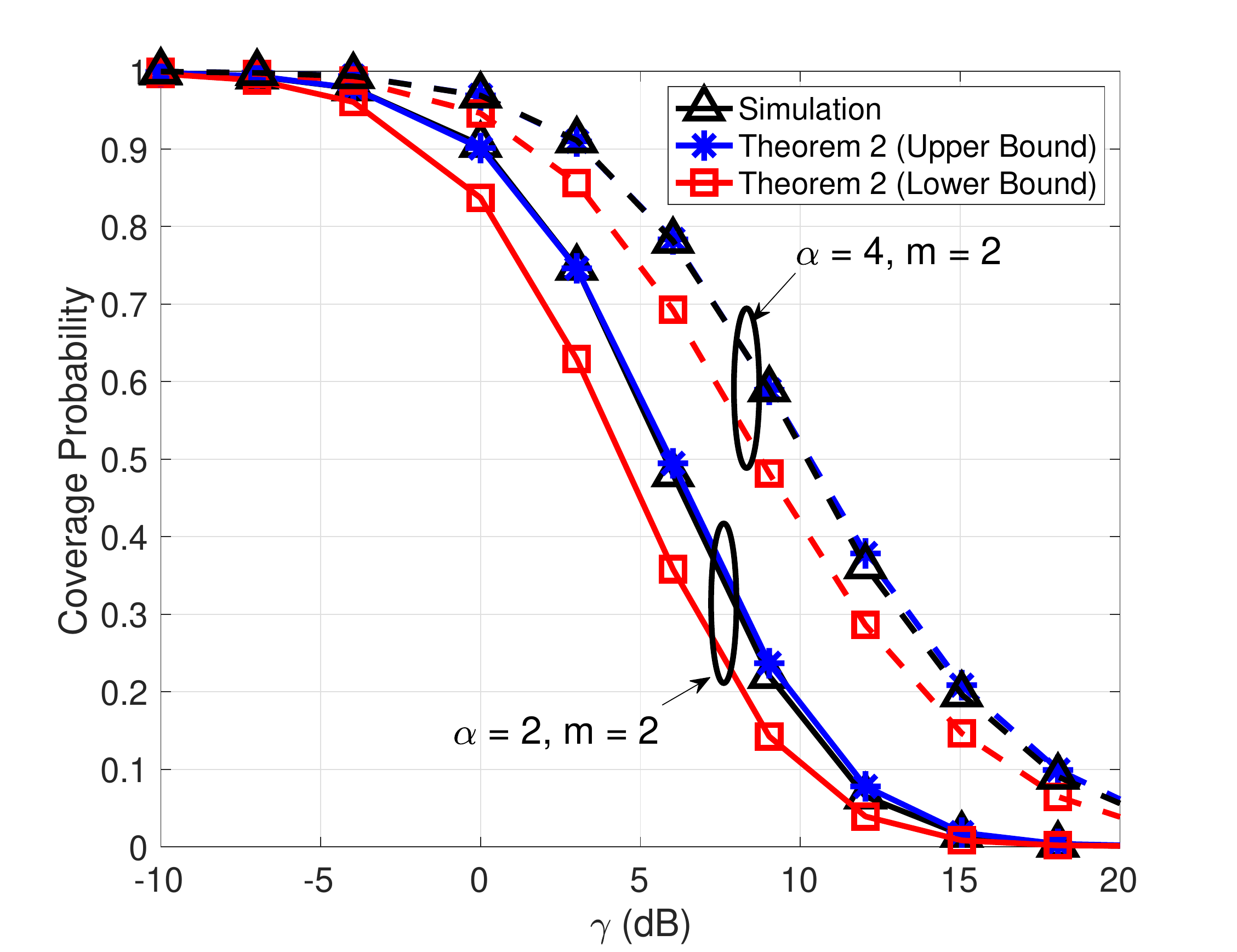}
    \caption{The coverage probability description for verifying Theorem \ref{Th2}. The simulation setups are: $\alpha \in \{2, 4\}$, ${\bar G}_i=0.1$, $\lambda|\mathcal{A}|=10$, $R_{\sf S}-R_{\sf E}=500$ km, and $m = 2$.  }\label{Fig_theorem2}
\end{figure}

As illustrated in Fig. \ref{Fig1-1}, the derived coverage probability expression in \eqref{eq:approx} tightly matches with the exact coverage probability obtained via numerical simulations over the entire range of $\gamma$ of interest and for the values of all relevant system parameters, $\lambda$, $\alpha$, $m$, and $R_{\sf S}$.

  
    \subsection{Lower Bound in {\color{black}{Tractable}} Form}

To make the analysis more tractable, we characterize an approximation to the coverage probability in a tractable form. 
{\color{black}{A main tractability bottleneck in Theorem \ref{Th2} is that the integral term is a function of $r$. 
}}
To resolve this, our approach is to represent the integral term in the conditional Laplace in Lemma \ref{lem3} to be independent of $r$, while making this tight. The following corollary represents a lower bound of the coverage probability in Theorem \ref{Th2} in tractable form.

      \begin{corollary}\label{Cor2} 
  	A lower bound of the coverage probability in {\color{black}{a tractable}} form is given by
{\color{black}{
 \begin{align}
 &	P^{{\sf{ cov}}, {\rm{L}}}_{{\sf SIR}} (\gamma; \lambda, \alpha,R_{\sf S},m) = e^{\lambda  \pi  \frac{R_{\sf S}}{R_{\sf E}} (R_{\sf S} - R_{\sf E})^2 } 
 \sum_{\ell=1}^{m} \! \binom{m}{\ell}(-1)^{\ell+1 } \nonumber \\
 & \left[\frac{e^{-\lambda \pi \frac{R_{\sf S}}{R_{\sf E}} (1+\eta^{\sf U}(\ell m \gamma  ; \alpha, R_{\sf S}, m)) R_{\rm min}^2 }  }{1+\eta^{\sf U}(\ell m \gamma  ; \alpha, R_{\sf S}, m) } -  \right. \nonumber \\
 &\left.  \frac{ e^{-\lambda \pi \frac{R_{\sf S}}{R_{\sf E}}  ( 1 +\eta^{\sf U}(\ell m \gamma  ; \alpha, R_{\sf S}, m) ) R_{\rm max}^2 } }{1+\eta^{\sf U}(\ell m \gamma  ; \alpha, R_{\sf S}, m) } \right],\label{eq:cor2}
 \end{align} 
 }} 
where 
\begin{align}
 \eta^{\sf U}(x; \alpha, R_{\sf S}, m)  =\left(\!\frac{{\bar G_i}x}{m}\!\right)^{\!\frac{2}{\alpha}}\int_{\left(\!\frac{{\bar G}_i{ x}}{m}\!\right)^{\!\!-\frac{2}{\alpha}} }^{\left(\!\frac{{\bar G}_i{ x}}{m}\!\right)^{\!\!-\frac{2}{\alpha}}\!\left(\!\frac{R_{\rm max}}{R_{\rm min}}\!\right)^2  }\!\! 1\!-\! \frac{1}{\left(1 \!+\!  u^{-\frac{\alpha}{2}} \right)^m}{\rm d} u.\label{eq:eta_upp}
 \end{align}
 \end{corollary}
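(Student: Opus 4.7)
The plan is to derive Corollary~\ref{Cor2} from the lower bound in Theorem~\ref{Th2} (evaluated at $\kappa=\kappa^{\sf L}=1$) by replacing the $r$-dependent quantity $\eta(\ell m\gamma, r;\alpha,R_{\sf S},m)$ inside the integrand with an $r$-independent upper bound, after which the remaining integral in $r$ admits a closed form. The first ingredient is a simple monotonicity fact: inspecting \eqref{eq:eta}, the integrand $1-(1+u^{-\alpha/2})^{-m}$ is non-negative, the lower integration limit is independent of $r$, and the upper limit $(\bar G_i x/m)^{-2/\alpha}(R_{\max}/r)^2$ is monotonically decreasing in $r$. Consequently $\eta(x,\,\cdot\,;\cdots)$ is non-increasing on $[R_{\min},R_{\max}]$ and attains its maximum at $r=R_{\min}$, which coincides with $\eta^{\sf U}(x;\alpha,R_{\sf S},m)$ in \eqref{eq:eta_upp}.

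Next, I would argue that substituting $\eta^{\sf U}$ for $\eta(\ell m\gamma,r;\cdots)$ in the Theorem~\ref{Th2} integrand preserves the lower-bound direction. Because of the alternating signs $(-1)^{\ell+1}$ in the binomial sum, a termwise comparison is insufficient, so the cleanest route is to back up one step in the proof of Theorem~\ref{Th2}: by Alzer's inequality (with $\kappa=1$ giving the lower-bound direction), the conditional coverage probability at nearest distance $r$ is lower bounded by $\mathbb{E}_{I_r}\!\bigl[1-(1-e^{-m\gamma r^\alpha I_r})^m\bigr]$, whose integrand is monotonically \emph{decreasing} in $I_r$. Replacing each $\mathcal{L}_{I_r}(\ell m\gamma r^\alpha)=\exp\!\bigl(-\lambda\pi \frac{R_{\sf S}}{R_{\sf E}} r^2\eta(\ell m\gamma,r;\cdots)\bigr)$ by the uniformly smaller expression $\exp\!\bigl(-\lambda\pi \frac{R_{\sf S}}{R_{\sf E}} r^2 \eta^{\sf U}(\ell m\gamma;\cdots)\bigr)$ corresponds, in Laplace-transform terms, to inflating the aggregate interference, which can only decrease the expectation of the above decreasing functional and hence yields a further lower bound on $P^{\sf cov}_{\sf SIR}$.

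Finally, with $\eta^{\sf U}$ now independent of $r$, the inner integral collapses to the standard identity
\begin{equation*}
\int_{R_{\min}}^{R_{\max}} r\, e^{-a_\ell r^2}\,{\rm d}r=\frac{1}{2a_\ell}\bigl(e^{-a_\ell R_{\min}^2}-e^{-a_\ell R_{\max}^2}\bigr),
\end{equation*}
with $a_\ell=\lambda\pi\frac{R_{\sf S}}{R_{\sf E}}\bigl(1+\eta^{\sf U}(\ell m\gamma;\cdots)\bigr)$. The prefactor $2\pi\lambda\frac{R_{\sf S}}{R_{\sf E}}$ from \eqref{eq:Th2} cancels with $1/(2a_\ell)$ to produce the factor $1/(1+\eta^{\sf U}(\ell m\gamma;\cdots))$ in front of each bracketed term in \eqref{eq:cor2}, while the exponential prefactor $e^{\lambda\pi\frac{R_{\sf S}}{R_{\sf E}}(R_{\sf S}-R_{\sf E})^2}$ and the binomial sum over $\ell$ are inherited unchanged. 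I expect the main obstacle to be the second step: rigorously justifying that globally substituting $\eta\mapsto\eta^{\sf U}$ inside the alternating-sign binomial sum truly yields a lower bound (and not merely an approximation), which requires appealing to the monotonicity of $1-(1-e^{-x})^m$ combined with a Laplace-transform/stochastic-dominance comparison for the shot-noise interference rather than any termwise estimate.
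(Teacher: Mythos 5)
Your proposal is correct and follows the same overall route as the paper's proof: start from the Theorem~\ref{Th2} lower bound at $\kappa=\kappa^{\sf L}=1$, replace the $r$-dependent exponent $\eta(\ell m\gamma,r;\cdots)$ by its maximum over $r\in[R_{\min},R_{\max}]$ (attained at $r=R_{\min}$, giving $\eta^{\sf U}$ in \eqref{eq:eta_upp}), and then evaluate the resulting $\int r e^{-a_\ell r^2}\,{\rm d}r$ in closed form; your bookkeeping of the prefactors ($\nu(\lambda,R_{\sf S})\,\mathbb{P}[\Phi(\mathcal{A})>0]=2\pi\lambda\tfrac{R_{\sf S}}{R_{\sf E}}e^{\lambda\pi\frac{R_{\sf S}}{R_{\sf E}}(R_{\sf S}-R_{\sf E})^2}$ and the cancellation against $1/(2a_\ell)$) matches \eqref{eq:cor2} exactly. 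Where you genuinely depart from the paper is the justification of the substitution step, and your version is the stronger one: the paper simply lower-bounds each term $\mathbb{E}\bigl[\mathcal{L}_{I_r}(\ell m r^{\alpha}\gamma)\mid\Phi(\mathcal{A})>0\bigr]$ and inserts these bounds into the alternating sum $\sum_{\ell}\binom{m}{\ell}(-1)^{\ell+1}(\cdot)$, which for $m\ge 2$ is not a valid termwise argument since the even-$\ell$ terms carry negative coefficients. Your fix --- observing that $\exp\bigl(-\lambda\pi\tfrac{R_{\sf S}}{R_{\sf E}}r^2\eta^{\sf U}(\ell m\gamma;\cdots)\bigr)$ is itself the Laplace transform of the shot noise over the enlarged region $[r,\,rR_{\max}/R_{\min}]\supseteq[r,R_{\max}]$, so the substitution amounts to stochastically inflating $I_r$, and then using that $x\mapsto 1-(1-e^{-x})^{m}$ is decreasing --- is exactly what is needed to make the global lower-bound direction rigorous. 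The only thing I would ask you to make fully explicit is the coupling (superposition of an independent PPP on the added annulus) that turns the Laplace-transform comparison into stochastic dominance; with that spelled out, your argument is complete and in fact patches a gap in the paper's own appendix.
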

   \proof 
See Appendix \ref{proof:cor2}.
\endproof

From the derived lower bound of the coverage probability, we can see that the coverage probability changes over the density of satellites $\lambda$ for fixed path-loss exponent $\alpha$, satellites' altitude $R_{\sf S}$, and target threshold $\gamma$. This result is in contrast to the coverage probability for terrestrial cellular networks modeled by homogenous PPPs, in which the coverage probability has shown to be invariant with the density of base stations in the interference-limited regime, assuming the standard single-slope path-loss law \cite{andrews:tcom:11,lee:twc:15}.  

It is remarkable that the coverage probability can either increase or decrease depending on the satellite density $\lambda$ as in \eqref{eq:cor2}. As the density becomes smaller,  the conditional coverage probability $P^{{\sf{ cov}}, {\rm L}}_{{\sf SIR}|\Phi(\mathcal{A})>0} (\gamma; \lambda, \alpha,R_{\sf S},m)$ improves, while the probability of satellite-visibility $\mathbb{P}[\Phi(\mathcal{A})>0]=\left(1- e^{-\lambda 2\pi(R_{\sf S}-R_{\sf E})R_{\sf S}  }\right)$ becomes deteriorated. As a result, optimizing the satellite density is the key to maximize the coverage performance in the satellite network design. In the next section, we will characterize the optimal satellite density for given other network design parameters. 
{\color{black}{For convenience, we present the following corollary as a special case of Corollary \ref{Cor2}. }}



    \begin{corollary}\label{Cor3} 
    Under the Rayleigh fading propagation, i.e., $m=1$, the  lower bound of the coverage probability boils down to
    {\color{black}{
 \begin{align}
 &	P^{{\sf{ cov}}, {\rm{L}}}_{{\sf SIR}} (\gamma; \lambda, \alpha,R_{\sf S},1)\nonumber\\
 =& e^{\lambda  \pi  \frac{R_{\sf S}}{R_{\sf E}} (R_{\sf S} - R_{\sf E})^2 } 
 \left[\frac{e^{-\lambda \pi \frac{R_{\sf S}}{R_{\sf E}} (1+\eta^{\sf U}( \gamma  ; \alpha, R_{\sf S}, 1)) R_{\rm min}^2 }  }{1+\eta^{\sf U}( \gamma  ; \alpha, R_{\sf S}, 1) } - \right. \nonumber \\
 & \left.\frac{ e^{-\lambda \pi \frac{R_{\sf S}}{R_{\sf E}}  ( 1 +\eta^{\sf U}( \gamma  ; \alpha, R_{\sf S}, 1) ) R_{\rm max}^2 } }{1+\eta^{\sf U}( \gamma  ; \alpha, R_{\sf S}, 1) } \right].\label{eq:closedform_Rayleigh}
 \end{align} 
We note that when $m = 1$, 
\begin{align}
    & \eta^{\sf U}(x; \alpha, R_{\sf S}, 1) = \left( \frac{{\bar G_i}x}{1} \right)^{ \frac{2}{\alpha}}\int_{\left( \frac{{\bar G}_i{ x}}{1} \right)^{  -\frac{2}{\alpha}} }^{\infty}   1 -  \frac{1}{\left(1  +   u^{-\frac{\alpha}{2}} \right)}{\rm d} u  - \nonumber\\
    & \left( \frac{{\bar G_i}x}{1} \right)^{ \frac{2}{\alpha}}\int_{\left( \frac{{\bar G}_i{ x}}{1} \right)^{  -\frac{2}{\alpha}} \left( \frac{R_{\rm max}}{R_{\rm min}} \right)^2  }^{\infty}   1 -  \frac{1}{\left(1  +   u^{-\frac{\alpha}{2}} \right)}{\rm d} u \\
    &=  \frac{\left({\bar G_i}x\right)^{ \frac{2}{\alpha}}}{\alpha-2}\left[ 2\left( \bar G_i x\right)^{\frac{2}{\alpha}-1} {}_2 F_1 \left(1, 1-\frac{1}{\alpha}, 2-\frac{2}{\alpha}, -\bar G_i x  \right) - \right. \nonumber \\
    &  \left. 2\left(\left({\bar G}_i{ x}\right)^{  -\frac{2}{\alpha}} \left( \frac{R_{\rm max}}{R_{\rm min}} \right)^2  \right)^{1-\frac{\alpha}{2}} \cdot \right. \nonumber \\
    & \left. {}_2 F_1 \left(1, 1-\frac{1}{\alpha}, 2-\frac{2}{\alpha}, -\left( \left({\bar G}_i{ x}\right)^{  -\frac{2}{\alpha}} \left( \frac{R_{\rm max}}{R_{\rm min}} \right)^2\right)^{-\frac{\alpha}{2}} \right) \right],
\end{align}
where $_2F_1(\cdot, \cdot, \cdot, \cdot)$ is a hypergeometric function. 
 }} 
  
    \end{corollary}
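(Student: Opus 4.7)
The plan is to obtain Corollary~\ref{Cor3} as a direct specialization of Corollary~\ref{Cor2} to $m=1$, followed by a closed-form evaluation of the one-dimensional tail integral defining $\eta^{\sf U}$ when the Nakagami parameter is unity. First, I would set $m=1$ in equation (\ref{eq:cor2}). Because $\binom{1}{\ell}=0$ for $\ell\neq 0,1$ and the sum in (\ref{eq:cor2}) starts at $\ell=1$, only the $\ell=1$ term survives, with coefficient $\binom{1}{1}(-1)^{1+1}=1$. This collapses the outer summation and reproduces the first displayed expression in (\ref{eq:closedform_Rayleigh}), with $\eta^{\sf U}(\gamma;\alpha,R_{\sf S},1)$ appearing in place of $\eta^{\sf U}(\ell m\gamma;\alpha,R_{\sf S},m)$.

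Next, I would attack $\eta^{\sf U}(x;\alpha,R_{\sf S},1)$. Setting $m=1$ in (\ref{eq:eta_upp}), the integrand simplifies via the algebraic identity
\begin{align*}
1-\frac{1}{1+u^{-\alpha/2}}=\frac{1}{1+u^{\alpha/2}},
\end{align*}
so that $\eta^{\sf U}$ reduces to an integral of a rational function of $u^{\alpha/2}$ over the finite interval $[(\bar G_i x)^{-2/\alpha},\,(\bar G_i x)^{-2/\alpha}(R_{\max}/R_{\min})^2]$. Writing the definite integral as the difference of two tail integrals $\int_c^{\infty}\frac{du}{1+u^{\alpha/2}}$ (one at each endpoint) converts the problem into evaluating a single canonical improper integral.

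The main computational step is then to evaluate $\int_c^{\infty}\frac{du}{1+u^{\alpha/2}}$ in closed form. The natural route is the substitution $w=u^{-\alpha/2}$, after which the integrand takes the form $w^{-2/\alpha}/(1+w)$ over $[0,c^{-\alpha/2}]$, and one invokes the standard integral representation
\begin{align*}
\int_0^{y}\frac{t^{a-1}}{1+t}\,{\rm d}t=\frac{y^{a}}{a}\,{}_2F_1\!\left(1,a;\,a+1;\,-y\right),
\end{align*}
with $a=1-2/\alpha$. Substituting $y=c^{-\alpha/2}$ and collecting constants yields a prefactor $\frac{2}{\alpha-2}c^{1-\alpha/2}$ times a $_2F_1$ with parameters matching those stated in the corollary. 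Applying this identity at the two endpoints $c=(\bar G_i x)^{-2/\alpha}$ and $c=(\bar G_i x)^{-2/\alpha}(R_{\max}/R_{\min})^2$, then multiplying by the outer factor $(\bar G_i x)^{2/\alpha}/(\alpha-2)$ and using that $c^{1-\alpha/2}$ with $c=(\bar G_i x)^{-2/\alpha}$ becomes $(\bar G_i x)^{2/\alpha-1}$, reproduces the displayed two-term $_2F_1$ expression.

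The only real obstacle is bookkeeping: the substitution introduces exponents such as $-2/\alpha$, $1-2/\alpha$, and $1-\alpha/2$ that must be tracked carefully, and the $_2F_1$ parameters and argument sign conventions (Pfaff/Euler transformations can rewrite them equivalently) must be matched to the specific form chosen by the authors. No new probabilistic or geometric argument is needed beyond what was already established in Corollary~\ref{Cor2}; the result is essentially an algebraic specialization together with a standard integral identity.
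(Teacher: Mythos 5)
Your proof is correct, and for the first display it coincides with the paper's own one-line argument: setting $m=1$ in Corollary~\ref{Cor2} leaves only the $\ell=1$ term with coefficient $\binom{1}{1}(-1)^{2}=1$ (and $\kappa^{\sf U}=\kappa^{\sf L}=1$, so there is no bound parameter left to tune), which gives \eqref{eq:closedform_Rayleigh} directly. Where you go beyond the paper is the hypergeometric evaluation of $\eta^{\sf U}(x;\alpha,R_{\sf S},1)$: the paper asserts that expression without proof, whereas you derive it from $1-\frac{1}{1+u^{-\alpha/2}}=\frac{1}{1+u^{\alpha/2}}$, the split into two tail integrals, the substitution $w=u^{-\alpha/2}$, and the identity $\int_0^y t^{a-1}(1+t)^{-1}\,{\rm d}t=\frac{y^a}{a}\,{}_2F_1(1,a;a+1;-y)$ with $a=1-\tfrac{2}{\alpha}$. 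Two remarks. First, your computation yields the parameter pair $\bigl(1-\tfrac{2}{\alpha},\,2-\tfrac{2}{\alpha}\bigr)$ and the lower-endpoint factor $c^{1-\alpha/2}=(\bar G_i x)^{1-2/\alpha}$; the corollary as printed shows $1-\tfrac{1}{\alpha}$ and $(\bar G_i x)^{\frac{2}{\alpha}-1}$. A numerical check at $\alpha=4$ against $\int_c^\infty (1+u^2)^{-1}{\rm d}u=\arctan(1/c)$ confirms your version and not the printed one, and no Pfaff or Euler transformation maps one to the other, so the hedge in your last paragraph about matching the authors' conventions cannot be discharged: the printed parameters appear to be typographical errors, and your derivation is the one to trust. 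Second, splitting the finite integral into two separately convergent tails requires $\alpha>2$; at $\alpha=2$ each tail diverges and only the difference is finite, which is why the paper records the logarithmic form \eqref{eq:21} separately. Stating that restriction would close the only small gap in your argument.
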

     \begin{proof}
 	The proof is direct from \eqref{eq:cor2} by setting $m=1$.
 \end{proof}
 
 Fig. \ref{Fig1-2} shows the tightness of the derived lower bound for the coverage probability in Corollary \ref{Cor2}. As shown in Fig.~\ref{Fig1-2}, the lower bound becomes tight as the density of satellites increases. When the density is high, the difference between $\eta(\gamma; \alpha, R_{\sf S}, 1 )$ and  $\eta^{\sf U}(\gamma; \alpha, R_{\sf S}, 1 )$ becomes less significant to the coverage probability.


\begin{figure} 
    \centering 
    \includegraphics[width=1\columnwidth]{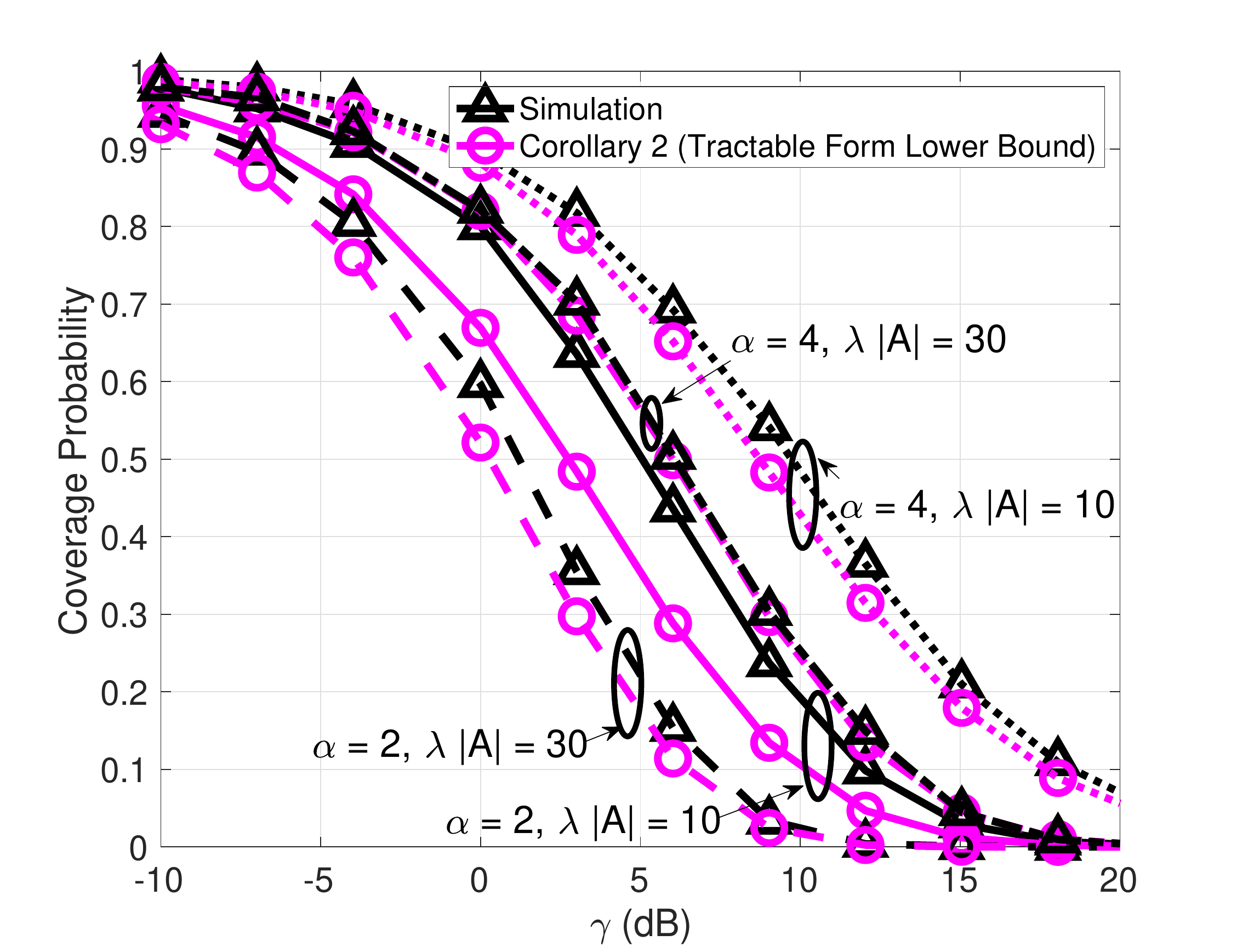}
    \caption{The coverage probability for verifying Corollary \ref{Cor2}. The simulation setups are: $\alpha=\{2,4\}$, ${\bar G}_i=0.1$, $\lambda|\mathcal{A}|\in \{10,30\}$ and $m=1$. }\label{Fig1-2}
\end{figure}

 \subsection{Special Cases and Interpretation}
 

 To shed further light on the significance of the coverage probability expressions in Corollary \ref{Cor2}, we consider certain values of the path-loss exponent $\alpha\in\{2,4\}$ and the Nakagami fading parameter $m\in\{1,2,4\}$. For these cases, we provide more tractable expressions for the coverage probabilities by specifying $\eta^{\sf U}(x; \alpha, R_{\sf S}, m) $ in \eqref{eq:eta_upp} into closed-forms.

{\bf LOS propagation:} The primary case of interest is when the satellites in $\mathcal{A}$ are visible under the premise that the typical receiver's location is outdoor (e.g., the rooftop satellite dish antenna). In this case, the desired and interference signals experience LOS path-loss, i.e., $\alpha=2$. We note that the shape parameter of the Nakagami-$m$ is connected to the Rician $K$-factor as 
\begin{align}
	m=\frac{(K+1)^2}{2K+1},
\end{align}
where $K$ represents the ratio between the powers of the direct and scattered paths in dB scale. For instance, $K=0$ (in dB) corresponds to the Rayleigh fading case.
When $m=2$, i.e., the LOS fading component is approximately $K=1+\sqrt{2}\simeq 2.414$ dB larger than the scattered ones, the integral term defined in \eqref{eq:eta_upp} reduces to 
  	    \begin{align}
 \eta^{\sf U}(x/\bar G; 2, R_{\sf S}, 2 ) =\frac{x^2}{x+\frac{R_{\sf S}+R_{\sf E}}{R_{\sf S}-R_{\sf E}}
}-\frac{x^2}{x+1}+2 x\log \left(\frac{x+\frac{R_{\sf S}+R_{\sf E}}{R_{\sf S}-R_{\sf E}}
}{x+1}\right).   \label{eq:22}
     	\end{align} 
As the LOS component in fadings becomes predominant, i.e., $m=4$, which corresponds to $K=3+2\sqrt{3}\simeq 6.464$ dB, we obtain
  	    \begin{align}
& \eta^{\sf U}(x/\bar G; 2, R_{\sf S}, 4) \nonumber \\
&=  \frac{x^2 \left(13x^2+30 x \frac{R_{\sf S}+R_{\sf E}}{R_{\sf S}-R_{\sf E}}+18 \left(\frac{R_{\sf S}+R_{\sf E}}{R_{\sf S}-R_{\sf E}}\right)^2\right)}{3\left(x+\frac{R_{\sf S}+R_{\sf E}}{R_{\sf S}-R_{\sf E}}\right)^3}\nonumber\\
&~~~-\frac{x^2 \left(13 x^2+30x+18\right)}{3(x+1)^3}  + 4 x \log \left(\frac{x+\frac{R_{\sf S}+R_{\sf E}}{R_{\sf S}-R_{\sf E}}}{x+1}\right)\!.  \label{eq:24}
     	\end{align}  
It is worth mentioning that since $\eta^{\sf U}(x; 2, R_{\sf S}, 4 )>\eta^{\sf U}(x; 2, R_{\sf S}, 2)$ for $\gamma \geq 0$, the integral term $\eta^{\sf U}(x; 2, R_{\sf S}, m)$ decreases as the LOS component in fadings becomes stronger.  Combined this fact with the lower bound expression in Corollary \ref{Cor2}, we can deduce that the LOS propagation is beneficial to improve the coverage performance, which agrees with our intuition.


{\bf  NLOS propagation:} The NLOS propagation model with path-loss exponent $\alpha=4$ and the rich-scattering environment $m=1$ is another case of interest, assuming the typical receiver is placed indoors. In this case, the exponent of the interference Laplace reduces to
   \begin{align}
\eta^{\sf U}(x/\bar G ; 4, R_{\sf S}, 1)
       &= \sqrt{x} ~{\rm arctan}\left( \frac{ \frac{2R_{\sf E}}{R_{\sf S}-R_{\sf E}}  \sqrt{x} }{ \frac{R_{\sf S}+R_{\sf E}}{R_{\sf S}-R_{\sf E}}+ x }\!\right).\label{eq:41}
  	\end{align}
  	To see the effect of the path-loss exponent, we also consider the case of $\alpha=2$ and $m=1$. The integral in \eqref{eq:eta_upp} becomes
   \begin{align}
\eta^{\sf U}(x/\bar G; 2, R_{\sf S}, 1)
  =x \ln\left( \frac{\frac{R_{\sf S}+R_{\sf E}}{R_{\sf S}-R_{\sf E}}+x}{1+x}\right).   \label{eq:21}
  	\end{align}
By comparing $\eta^{\sf U}(x; 4, R_{\sf S}, 1)$ in \eqref{eq:41} and $\eta^{\sf U}(x; 2, R_{\sf S}, 1)$ in \eqref{eq:21}, we observe that the coverage probability becomes worse as the path-loss exponent $\alpha$ decreases because $\eta^{\sf U}(x; 2, R_{\sf S}, 1)$ is larger than $\eta^{\sf U}(x; 4, R_{\sf S}, 1)$ for wide range of $x$ for fixed $R_{\sf S}$. Moreover, from Corollary \ref{Cor2},  the coverage performance becomes worse when increasing $\eta(x ; 2, R_{\sf S}, 1)$. This fact implies that the small path-loss exponent, such as the LOS propagation environment, is detrimental to the coverage performance. This observation is interesting because the LOS propagation enhances the coverage probability by reducing the small-scale fading effects. The NLOS propagation is preferred for the coverage enhancement because the high path-loss exponent decreases more co-channel interference power than the LOS case. Notwithstanding, this result can be the opposite if we adopt a more sophisticated path-loss model featuring LOS and NLOS propagations for the desired and interference signals separately with high probability.

\begin{figure} 
    \centering 
    \includegraphics[width=1\columnwidth]{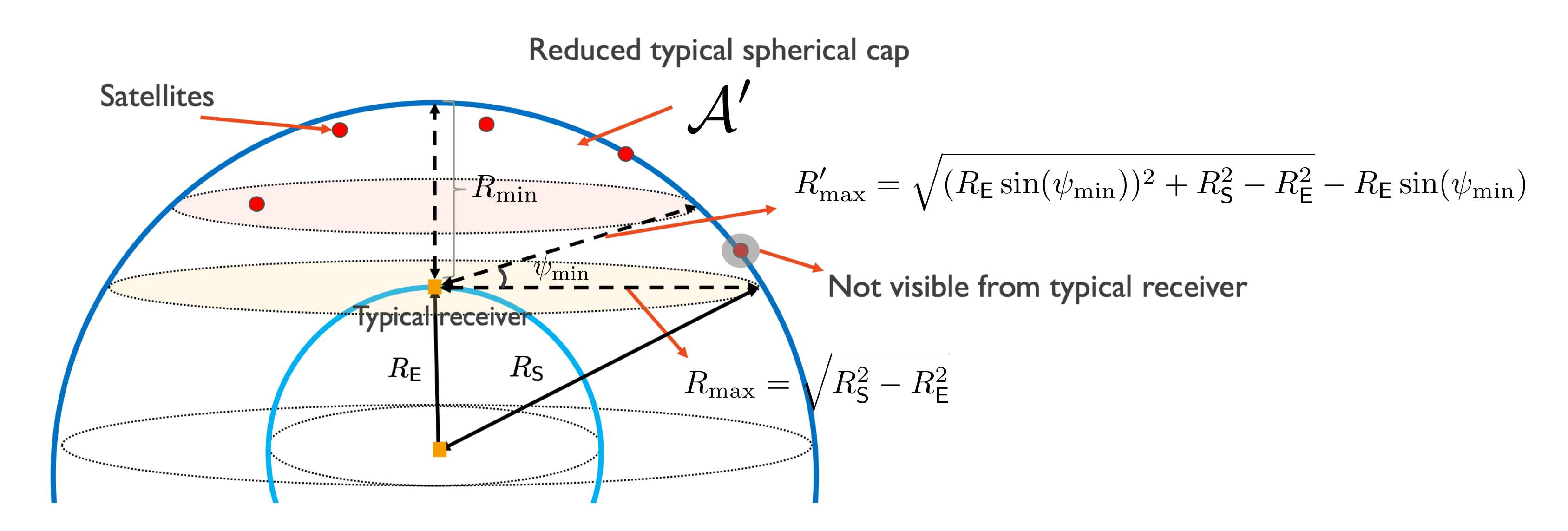}
    \caption{Illustration of reduced spherical cap caused by limited visibility with $\psi_{\sf min}$. 
    }\label{fig:geo_reduced}
\end{figure}

{\color{black}{
{\bf Limited visibility:} In practice, the visibility of the typical receiver can be limited. Specifically, a satellite whose elevation angle is smaller than $\psi_{\rm min}$ may not be visible to the typical receiver. This results in the reduced typical spherical cap $\mathcal{A}'$ and the reduced maximum distance $R_{\rm max}'$. Please see Fig.~\ref{fig:geo_reduced} for a detailed illustration. As seen in Fig.~\ref{fig:geo_reduced}, if $\psi_{\rm min} = 0$, the limited visibility case goes back to the full visibility case. To reflect this limited visibility case into our analysis, we first compute the reduced maximum distance $R_{\rm max}'$ and the area of the reduced typical spherical cap $\mathcal{A}'$ and as a function of $\psi_{\rm min}$ as follows:
\begin{align}
    & R_{\rm max}' = \sqrt{(R_{\sf E} \sin (\psi_{\rm min}))^2 + R_{\sf S}^2 - R_{\sf E}^2 } - R_{\sf E} \sin (\psi_{\rm min}), \\
    & |\mathcal{A}'| = 2\pi R_{\sf S}\left(R_{\sf S} - R_{\sf E} - R_{\rm max}' \cos(\pi/2 - \psi_{\rm min})\right).
\end{align}
By replacing $R_{\rm max}$ by $R_{\rm max}'$ and $|\mathcal{A}|$ by $|\mathcal{A}'|$, our analytical results are readily extended for the limited visibility case. 
}}

\section{Optimal Satellite Density}


In this section, we derive an optimal satellite density that maximizes the lower bound of the coverage probability in Corollary \ref{Cor3}.  The following theorem illuminates how the optimal satellite density is determined as a function of satellite height $R_{\rm min}$ and path-loss exponent $\alpha$.

   \begin{theorem}\label{Th3}
   	For given $\alpha$, $R_{\sf S}=R_{\rm min}+R_{\sf E}$, and $\gamma$, we define the optimal density that maximizes the lower bound of the coverage probability as 
   	\begin{align}
   		\lambda^{\star} =\argmax_{\lambda\geq 0}P^{{\sf{ cov}}, {\rm{L}}}_{{\sf SIR}} (\gamma ;\lambda, \alpha,R_{\sf S},1).
   	\end{align}
   Then, such optimal density is
   	\begin{align}
   	\lambda^{\star} = \frac{\ln\left(1 + \frac{2 R_{\sf E}\left[1+\eta^{\sf U}(\gamma; \alpha, R_{\sf S}, 1 )\right]}{ \eta^{\sf U}(\gamma; \alpha, R_{\sf S}, 1 )R_{\rm min}}\right)}{\left[1+\eta^{\sf U}(\gamma; \alpha, R_{\sf S}, 1 )\right] (2\pi R_{\sf S} R_{\rm min})}. \label{eq:opt_density}
   	\end{align}

   \end{theorem}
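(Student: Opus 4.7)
The plan is to attack Theorem~\ref{Th3} by differentiating the closed-form lower bound in Corollary~\ref{Cor3} with respect to $\lambda$. First, I would simplify the formula \eqref{eq:closedform_Rayleigh} into the single-parameter exponential difference
\begin{align*}
P^{{\sf cov},{\rm L}}_{{\sf SIR}}(\gamma;\lambda,\alpha,R_{\sf S},1)=\frac{1}{1+\eta}\Bigl[e^{-a\lambda}-e^{-b\lambda}\Bigr],
\end{align*}
where $\eta\equiv\eta^{\sf U}(\gamma;\alpha,R_{\sf S},1)$ is independent of $\lambda$. To extract $a$ and $b$, I would factor the common exponent $\lambda\pi\tfrac{R_{\sf S}}{R_{\sf E}}R_{\rm min}^2$ into each of the two bracketed terms; the first term collapses to $e^{-\lambda\pi(R_{\sf S}/R_{\sf E})\eta R_{\rm min}^2}$, giving $a=\pi(R_{\sf S}/R_{\sf E})\eta R_{\rm min}^2$. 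For the second term, I would use the identity $R_{\rm max}^2=R_{\sf S}^2-R_{\sf E}^2=R_{\rm min}(R_{\sf S}+R_{\sf E})$, so that $R_{\rm min}^2-(1+\eta)R_{\rm max}^2=-R_{\rm min}[2R_{\sf E}+\eta(R_{\sf S}+R_{\sf E})]$, yielding $b=\pi(R_{\sf S}/R_{\sf E})R_{\rm min}[2R_{\sf E}+\eta(R_{\sf S}+R_{\sf E})]$.

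Next, since $(1+\eta)^{-1}$ is a positive constant in $\lambda$, the first-order optimality condition reduces to $ae^{-a\lambda}=be^{-b\lambda}$, which solves uniquely as
\begin{align*}
\lambda^{\star}=\frac{\ln(b/a)}{b-a}.
\end{align*}
I would then do the one-line algebraic check that this is indeed a maximum by verifying $b>a$ (equivalently $2R_{\sf E}>0$, combined with positivity of $\eta$) and noting that the second derivative $(1+\eta)^{-1}[a^2e^{-a\lambda}-b^2e^{-b\lambda}]$ evaluated at $\lambda^{\star}$ is negative since at that point $ae^{-a\lambda^\star}=be^{-b\lambda^\star}$ so the sign of the second derivative matches $\mathrm{sgn}(a-b)<0$.

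It only remains to put the ratio and difference into the claimed form. The difference simplifies cleanly: $b-a=\pi(R_{\sf S}/R_{\sf E})R_{\rm min}[2R_{\sf E}+\eta(R_{\sf S}+R_{\sf E}-R_{\rm min})]=2\pi R_{\sf S}R_{\rm min}(1+\eta)$, using $R_{\sf S}+R_{\sf E}-R_{\rm min}=2R_{\sf E}$. For the ratio, $b/a=[2R_{\sf E}+\eta(R_{\sf S}+R_{\sf E})]/(\eta R_{\rm min})$, and the identity $\eta R_{\rm min}+2R_{\sf E}\eta=\eta(R_{\sf S}+R_{\sf E})$ lets me rewrite $b/a=1+\tfrac{2R_{\sf E}(1+\eta)}{\eta R_{\rm min}}$. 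Substituting both into $\ln(b/a)/(b-a)$ yields exactly \eqref{eq:opt_density}.

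The step I expect to demand the most care is the algebraic reduction of the exponents into the clean $e^{-a\lambda}-e^{-b\lambda}$ form; in particular, the simultaneous cancellations using $R_{\rm max}^2=R_{\rm min}(R_{\sf S}+R_{\sf E})$ and $R_{\rm min}=R_{\sf S}-R_{\sf E}$ must land correctly so that both $b-a$ and $b/a$ end up free of $R_{\sf S}^2$-type terms. Everything after that reduction is routine calculus, and no probabilistic argument beyond Corollary~\ref{Cor3} is needed.
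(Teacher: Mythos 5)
Your proposal is correct and follows essentially the same route as the paper's Appendix F: both reduce the Corollary~\ref{Cor3} expression to the form $\frac{e^{-a\lambda}-e^{-b\lambda}}{c}$ with the same $a$ and $b$, solve the first-order condition to get $\lambda^{\star}=\ln(b/a)/(b-a)$, and then simplify $b-a=2\pi R_{\sf S}R_{\rm min}(1+\eta^{\sf U})$ and $b/a=1+\tfrac{2R_{\sf E}(1+\eta^{\sf U})}{\eta^{\sf U}R_{\rm min}}$ using $R_{\rm max}^2=R_{\rm min}(R_{\sf S}+R_{\sf E})$. Your explicit second-derivative check at the unique critical point is a slightly cleaner justification of the maximum than the paper's first-derivative sign argument (which contains a reversed inequality in \eqref{eq:optimal}), but the substance is identical.
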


  \begin{proof} 
 
See Appendix \ref{proof:Th3}.
  \end{proof}

%
 
Theorem \ref{Th3} suggests that the optimal density $\lambda^{\star}$ diminishes as the altitude of the satellite $R_{\rm min}$ increases.  This result implies that it is necessary to deploy more satellites for LEO networks to enhance the coverage performance, which agrees with the conventional wisdom obtained from extensive simulation studies \cite{vatalaro:jsac:95, mokhtar:wcl:20,seyedi:commlett:12}.

To provide a better understanding for the result in Theorem \ref{Th3},  it is more informative to characterize the optimal average number of satellites  by multiplying $\lambda^{\star}$ to the area of the spherical cap $|\mathcal{A}|=2 \pi R_{\sf S} R_{\rm min}$, namely, 
   	\begin{align}
   		 \lambda^{\star}|\mathcal{A}|  =  \frac{ \ln\left(1 + \frac{2 R_{\sf E}\left[1+\eta^{\sf U}(\gamma; \alpha, R_{\sf S}, 1 )\right]}{ \eta^{\sf U}(\gamma; \alpha, R_{\sf S}, 1 )R_{\rm min}}\right)}{\left[1+\eta^{\sf U}(\gamma; \alpha, R_{\sf S}, 1 )\right] }. \label{eq:trade-off}  
   	\end{align}
This fact elucidates the optimal trade-off between the average number of satellites and altitude $R_{\rm min}$. From this trade-off, we can provide valuable guidance in the network design. For instance, the average number of satellites in the typical spherical cap decreases with the altitude logarithmically to maximize the coverage performance. In particular, for VLEO satellite networks, in which $ R_{\rm min} \ll R_{\sf E}(=6350~ {\rm km})$, the optimal number of satellites approximately scales as 
   	\begin{align}
		 \lambda^{\star}|\mathcal{A}|  \propto  \frac{1}{[1 + \eta^{\sf U}(\gamma; \alpha, R_{\sf S}, 1 )]}\ln \left( \frac{ 1}{ R_{\rm min}} \right).  
   	\end{align}
In the shallow altitude regime, that average number of satellites can scale down logarithmically with the altitude. Nonetheless, deploying more satellites requires high costs; a satellite network operator needs to optimize the trade-off between the deploying cost and the coverage performance.

\section{Simulation Results}
This section provides numerical results to validate the derived coverage probabilities and assess the impact of important system parameters, including the density of satellites, path-loss exponent, and altitude. In particular, we also illustrate how many satellites are needed in an average sense to maximize the coverage probability according to the satellite altitude. In our simulation, we set ${ \bar G}_i=0.1$, i.e., the desired link experiences 10 dB higher antenna gain compared to those from interfering links thanks to the transceiver's beam alignment procedure.

{\color{black}{
Note that we use MATLAB simulation for obtaining the coverage probability. 
We first build a 3D satellite network geometry as in Fig.~\ref{fig:geo}, wherein the typical user is located at $(0,0,R_{\sf E})$. Thereafter, we identify the typical spherical cap $\mathcal{A}$ and distribute points in $\mathcal{A}$ according to a PPP with density $\lambda |\mathcal{A}| $. Then we make a channel fading on each link. 
}}




{\color{black}{
{\bf Comparison of PPP and BPP:} To understand the impact on the spatial distributions of satellite networks, we compare the coverage probability of the proposed PPP model with that of the existing BPP model used in \cite{okati:tcom:20,talgat:commlett:20,talgat:commlett:21} in Fig.~\ref{Fig_ppp_bpp}. We note that $\lambda|\mathcal{A}|$ is the average number of satellites in $\mathcal{A}$ in a PPP model, while $N_{\text {BPP}}$ is the exact number of satellites in $\mathcal{A}$ in a BPP model. As seen in Fig. \ref{Fig_ppp_bpp}, the coverage probability performances of the two models are tightly matched when $\lambda|\mathcal{A}|=N_{\text{BPP}} = 10$. On the contrary to that, the two models provide inconsistent coverage probability performances when the satellite density is low, i.e., $\lambda|\mathcal{A}|= N_{\text{BPP}}=  2$. This result is aligned with that the BPP only can tightly approximate the PPP for a dense network. Nonetheless, the BPP model is limited in capturing the interplay between the satellite access probability and the conditional coverage probability in the low-density regime. 
}}

\begin{figure} 
    \centering 
    \includegraphics[width=1\columnwidth]{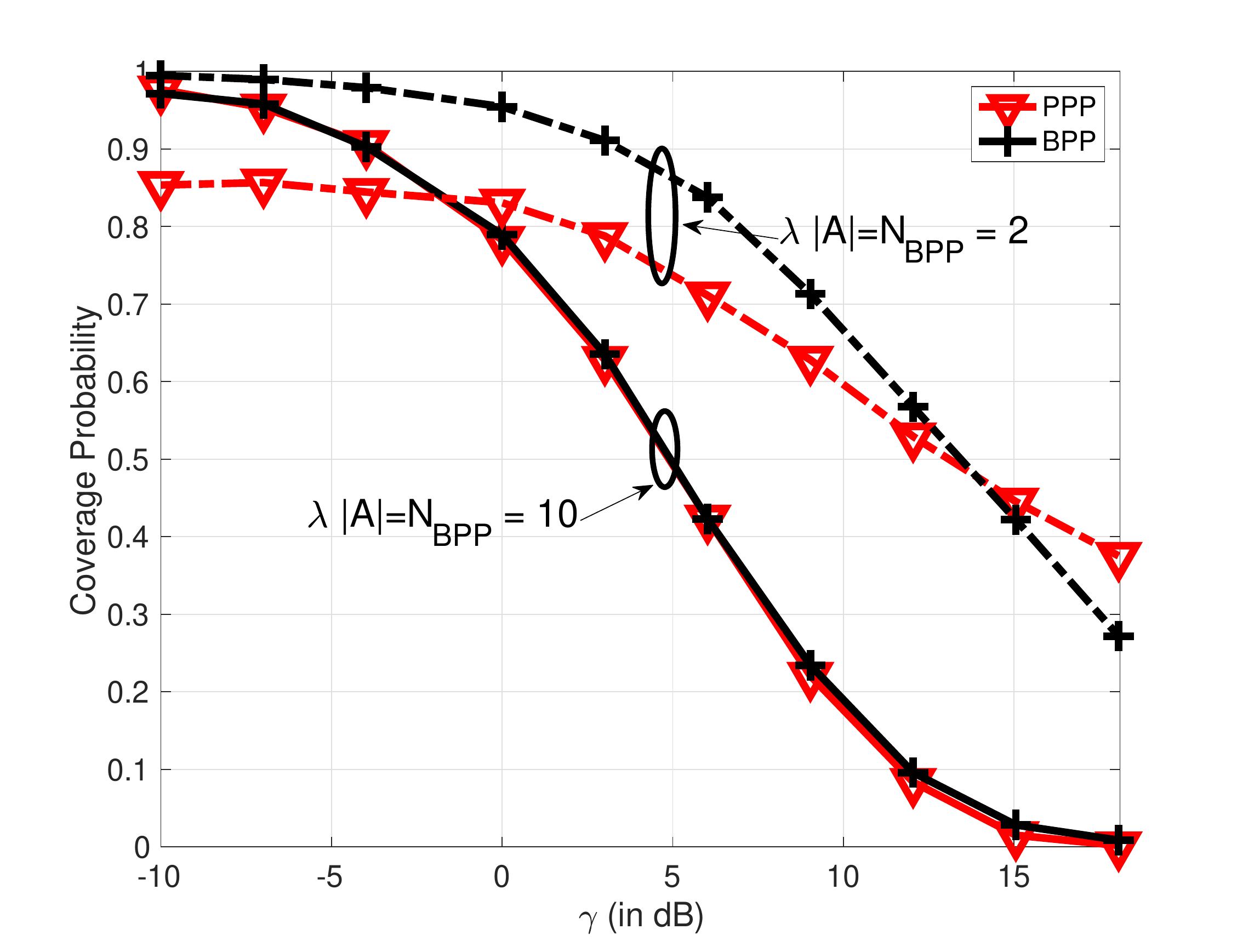}
    \caption{The coverage probabilities of PPP and BPP when $\alpha =2$, $h=R_{\sf S}-R_{\sf E}=550$ km, and $m=1$.  }\label{Fig_ppp_bpp}
\end{figure}

{\color{black}{
{\bf Comparison to realistic satellite constellations:} 
We compare the coverage probabilities obtained with a PPP model and actual Starlink satellite constellations to verify how well a PPP model captures realistic satellite constellations. Specifically, we collected $200$ random snapshot samples of visible satellites from Seoul, South Korea, from the website: https://satellitemap.space. Then, leveraging this data set, we composed downlink satellite communication networks where the typical user is located in Seoul. In such a satellite network, we assumed to use $20$ orthogonal frequency-time resources as in \cite{okati:tcom:20}, in which each satellite randomly selects one orthogonal resource. We note that this is equivalent to independent thinning in a PPP. Fig.~\ref{Fig_histo}-(a) depicts the empirical distribution for the number of visible satellites seen by the downlink user and the associated Poisson distribution with the moment matching, i.e., the estimated mean value is about $8.19$. Interestingly, the empirical distribution obtained from real data sets is tightly matched with the Poisson distribution. This result confirms that the number of satellites visible in the typical spherical cap $\mathcal{A}$ can be tightly approximated as the Poisson distribution. 
{\color{black}{Fig.~\ref{Fig_histo}-(b) shows the satellite spatial distribution in a single snapshot. 
}}

{\color{black}{
We clarify that our model captures effects on satellite constellations in a spatially average sense. To be specific, we model a satellite network by a PPP, so that multiple constellation snapshots are incorporated into one single expression; by which our results get analytical generality. As shown in Fig.~\ref{Fig_histo} and \ref{Fig_starlink}, this approach is well fitted in densely deployed satellite networks such as Starlink. Nonetheless, our approach is limited in that it cannot reflect specific conditions of satellite networks, e.g., particular satellite orbits. For this reason, our analysis may not accurately model orbit-dependent satellite networks such as OneWeb. To properly study such orbit-dependent satellite networks, we need a new model that can reflect orbit-specific characteristics into analysis as in \cite{lee:arxiv:22}; yet this model cannot have analytical tractability as our current analysis. 
}}


In Fig.~\ref{Fig_starlink}-(a), we present the coverage probabilities of PPP, BPP, and actual Starlink constellations by choosing {\color{black}{the minimum elevation angle $\psi_{\rm min} = 25^{\circ}$, which follows the grant from the FCC.}} The figure shows that the coverage probability of the actual Starlink constellations is well captured by a PPP model, justifying using a homogeneous PPP to model actual satellite constellations. 
Nevertheless, a noticeable gap exists between the coverage probabilities when $\gamma$ is large. We conjecture that this gap arises because, in actual constellations, each satellite location is repulsive, refraining that two satellite points are very closely located, while a PPP cannot capture such correlation. Studying such a sophisticated point process is promising for future work. {\color{black}{With a PPP model used in this paper, we can heuristically tune the density parameter as $\lambda_{\sf tune} = 0.75\lambda$ to decrease the coverage probability gap as shown in Fig.~\ref{Fig_starlink}-(b).}}
}}

\begin{figure}
\centering
\begin{subfigure}{0.75\columnwidth}
  \includegraphics[width=1\columnwidth]{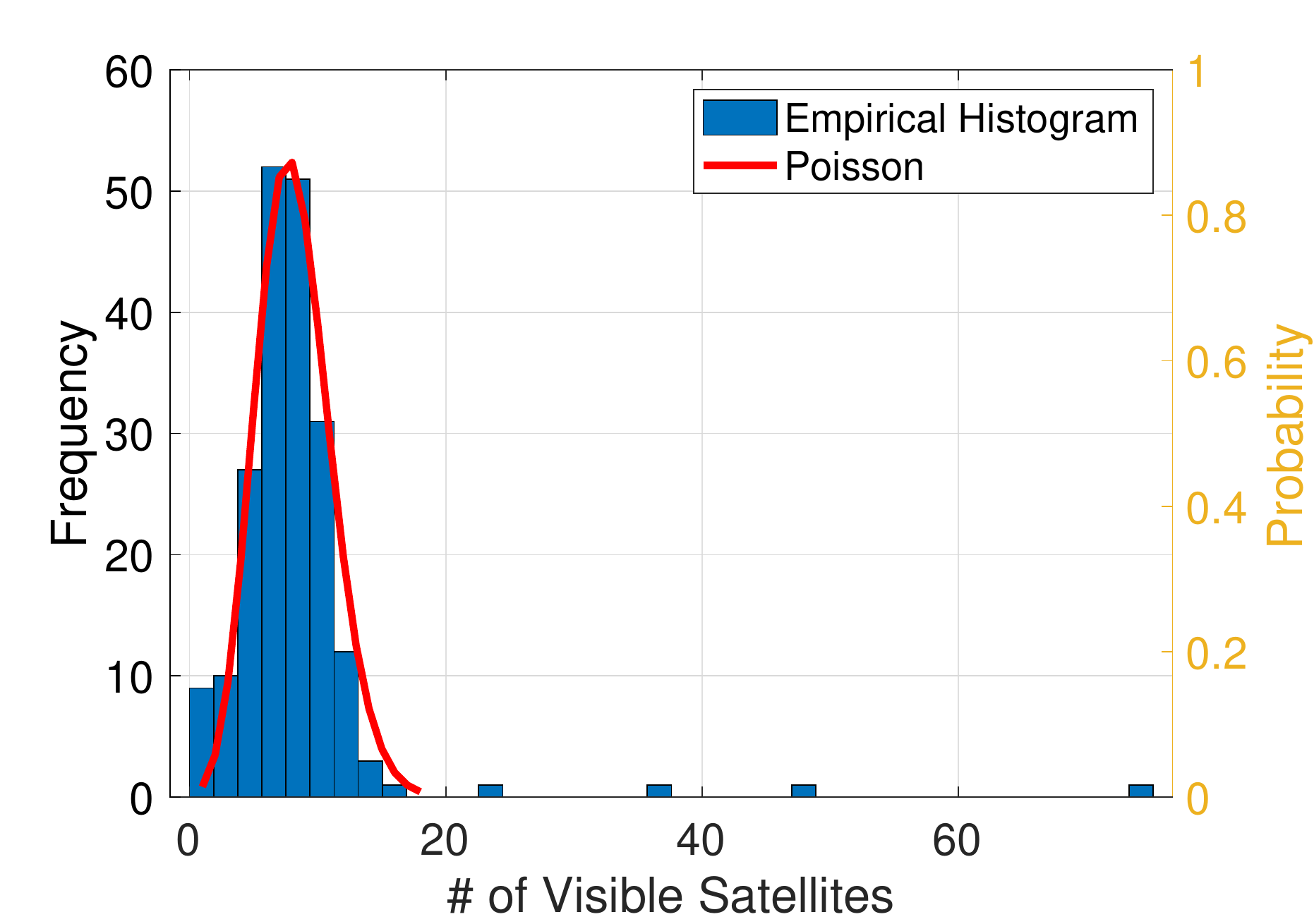}  
  \caption{}
\end{subfigure}
\begin{subfigure}{1\columnwidth}
  
  \includegraphics[width=1\columnwidth]{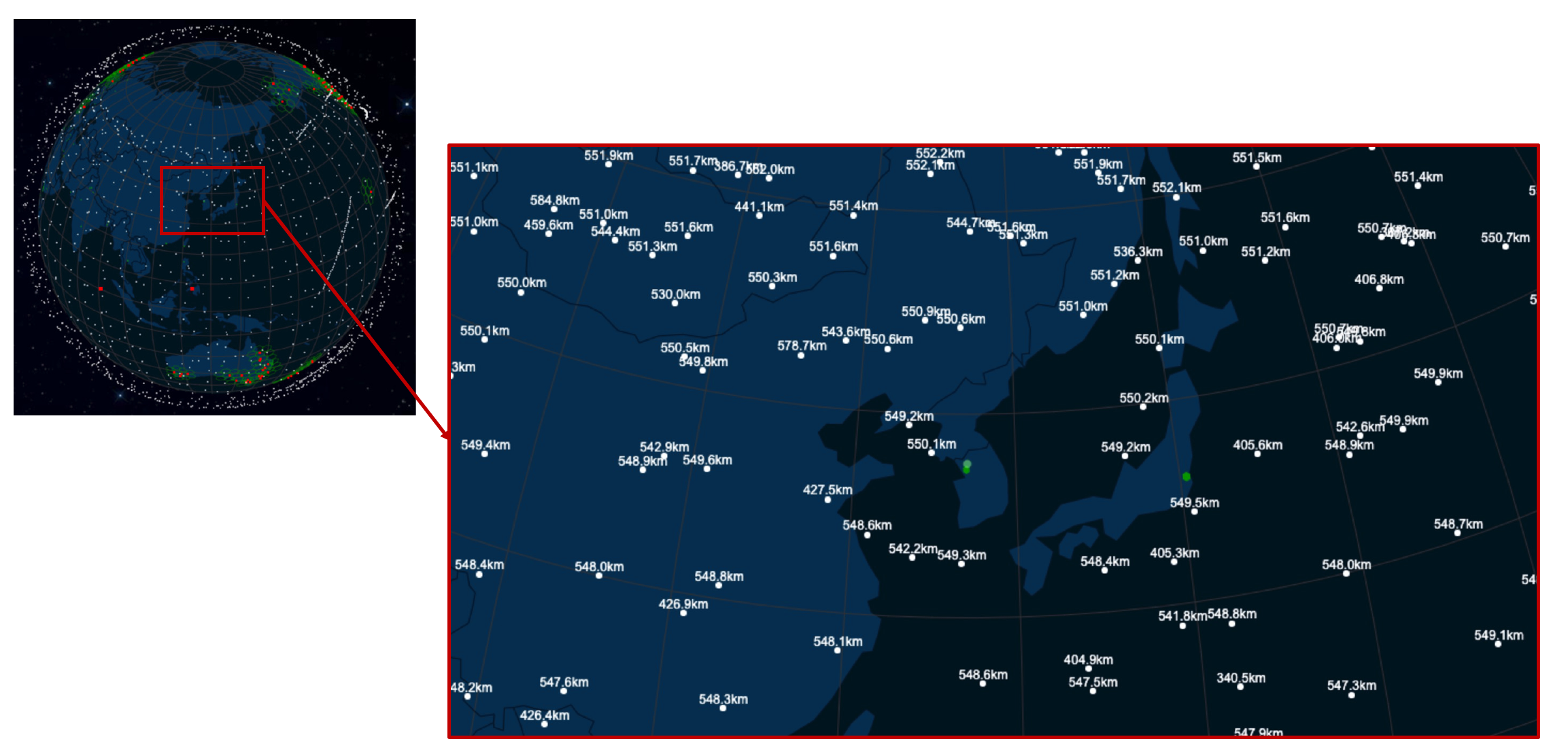}  
  \caption{}
\end{subfigure}
\caption{(a) The comparison between empirical histogram for the number of visible satellites for $200$ random snapshots vs. Poisson probability by matching the mean. The satellite constellation data comes from public tracking data published at http://satellitemap.space and http://space-track.org. (b) The satellite constellation map captured from http://satellitemap.space.}
\label{Fig_histo}
\end{figure}

\begin{figure}
\centering
\begin{subfigure}{.85\columnwidth}
  
  \includegraphics[width=1\columnwidth]{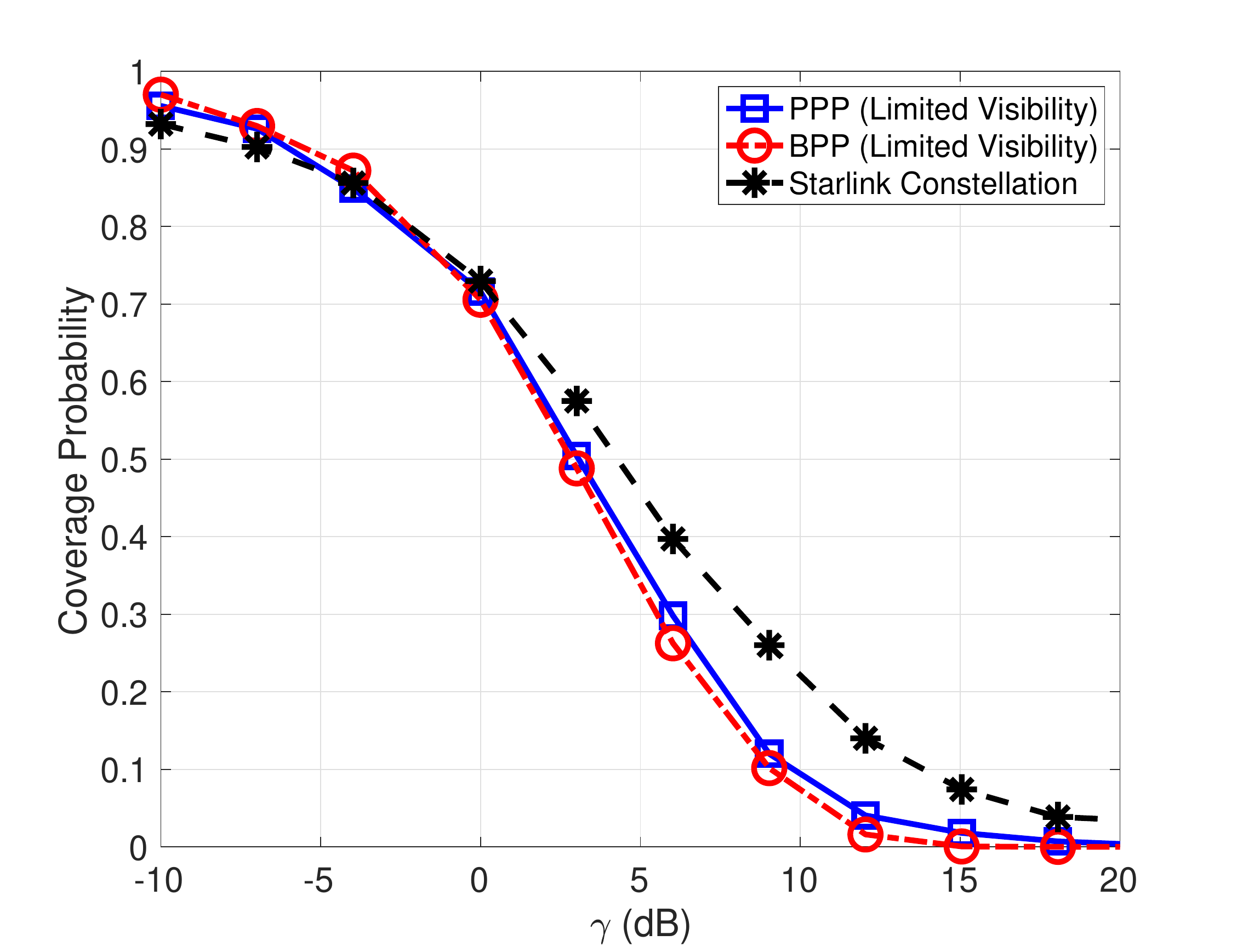}  
  \caption{}
\end{subfigure}
\begin{subfigure}{.85\columnwidth}
 
  \includegraphics[width=1\columnwidth]{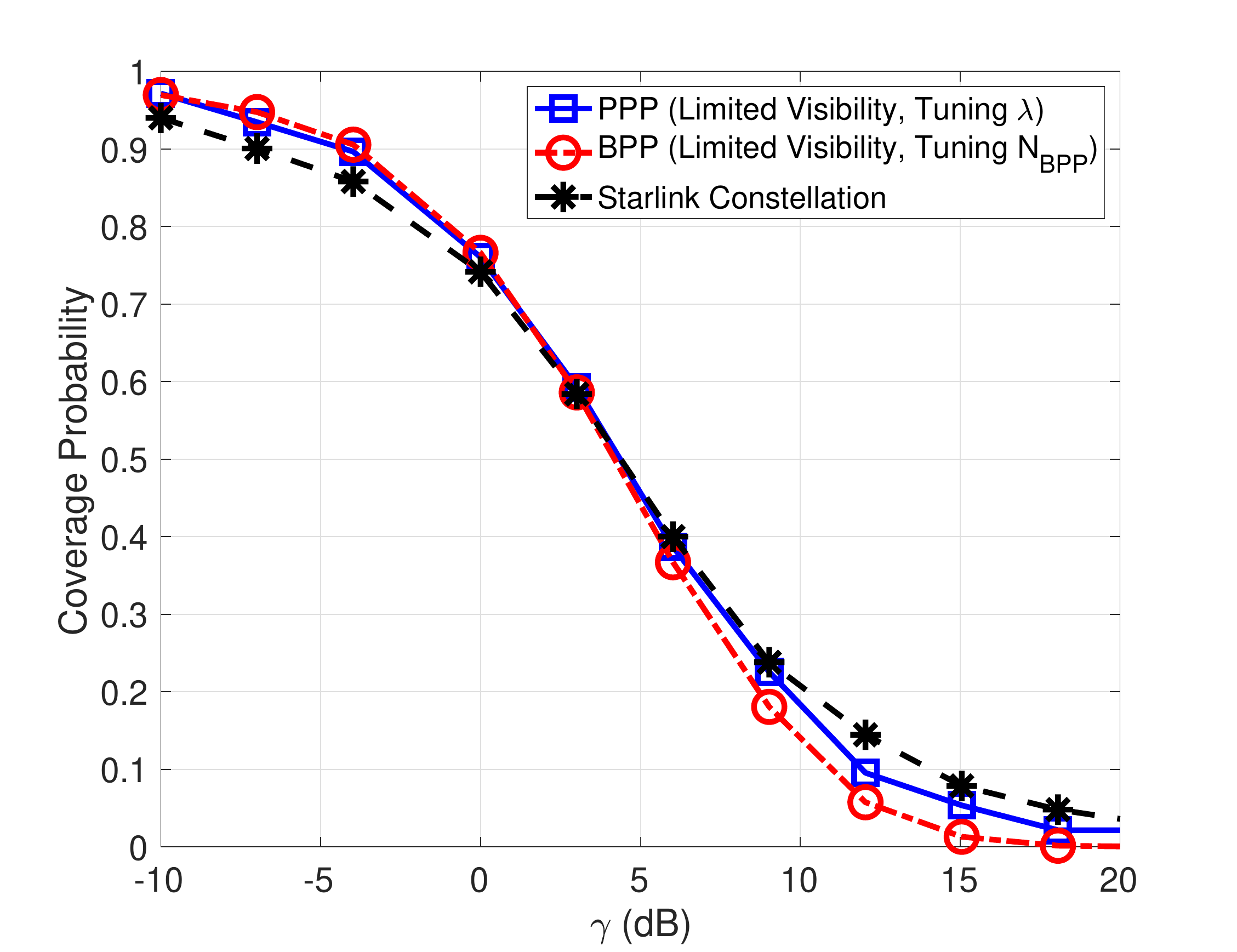}  
  \caption{}
\end{subfigure}
\caption{(a) The coverage probabilities of PPP, BPP, and actual Starlink constellations when $\alpha =2$, $h=R_{\sf S}-R_{\sf E}=550$ km, and $m=1$. (b) The coverage probabilities of PPP, BPP, and actual Starlink constellations by heuristically tuning the density parameter as $\lambda_{\sf tune} = 0.75\lambda$.}
\label{Fig_starlink}
\end{figure}



{\bf Effect of satellite density:} To see the effect of the satellite density, we evaluate the coverage probability by increasing the density of satellites for fixed $\alpha=2$, $h=R_{\sf S}-R_{\sf E}=500$ km, and $m=1$.  As can be seen in Fig.~\ref{Fig_combined}-(a), the coverage probability tends to enhance as decreasing the satellite density because of the reduced co-channel interference power. However, it is remarkable that the coverage performance declines, provided that the satellite density becomes smaller than $\lambda|\mathcal{A}|\simeq 1$. This phenomenon occurs because the probability of satellite-visibility exponentially decreases with the average number of satellites. Consequently, we need to carefully choose the density when deploying a satellite network to maximize the coverage performance.

\begin{figure}
\centering
\begin{subfigure}{.85\columnwidth}
 
  \includegraphics[width=1\columnwidth]{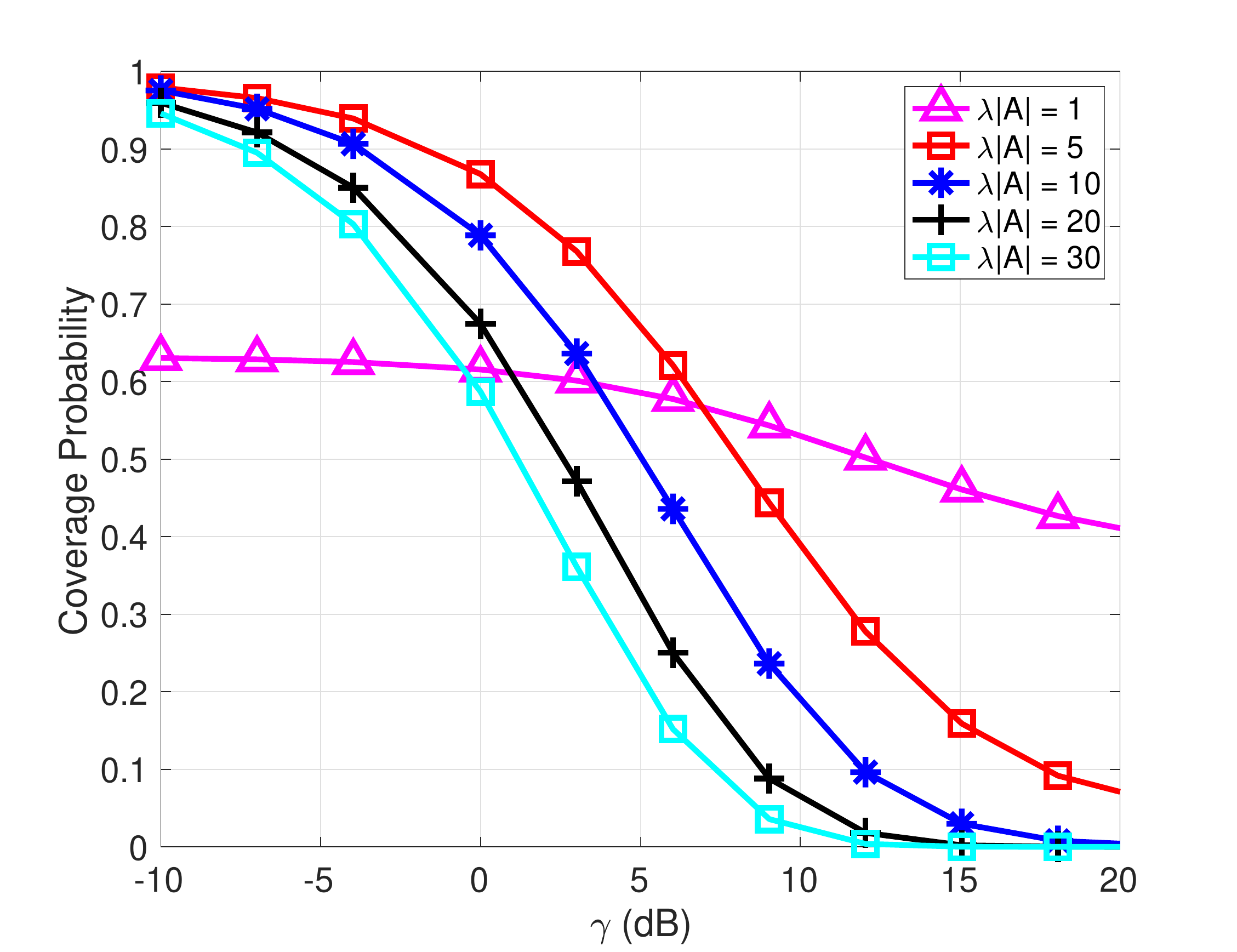}  
  \caption{}
\end{subfigure}
\begin{subfigure}{.85\columnwidth}
  \includegraphics[width=1\columnwidth]{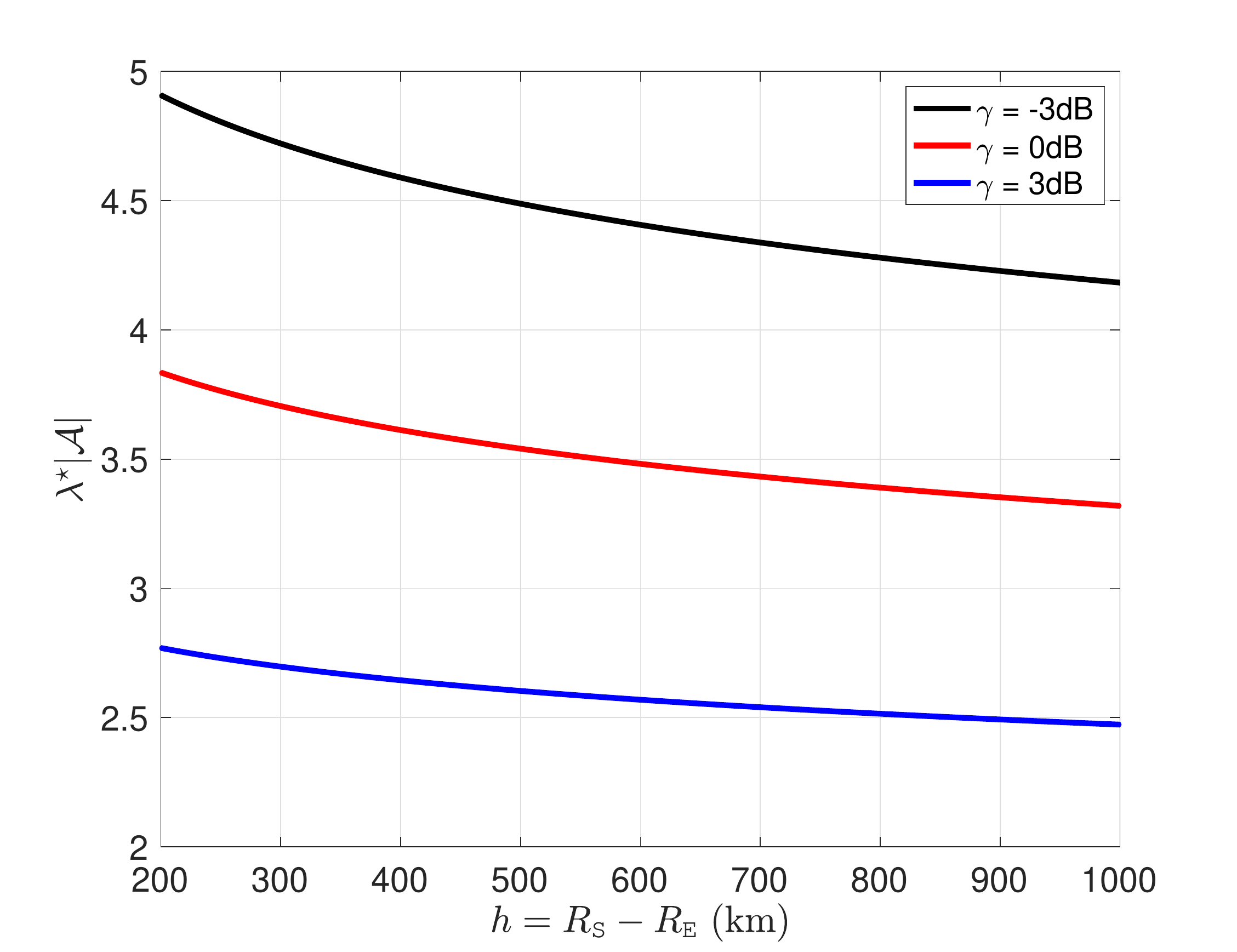}  
  \caption{}
\end{subfigure}
\caption{(a) The coverage probabilities per $\lambda |\mathcal{A}|$ when $\alpha =2$, $h=R_{\sf S}-R_{\sf E}=500$ km, and $m=1$. (b) The optimal average number of satellites versus the altitude of satellites  according to the target threshold $\gamma$ for fixed $\alpha =2$  and $m=1$.}
\label{Fig_combined}
\end{figure}



%

{\bf Optimal trade-off:} Fig.~\ref{Fig_combined}-(b) shows the optimal trade-off between the average number of satellites $\lambda^{\star}|\mathcal{A}|$ and the satellite altitude $h=R_{\sf S}-R_{\sf E}$ derived in \eqref{eq:trade-off}. As can be seen, the optimal average number scales down logarithmically with the altitude. This trade-off suggests that a network operator must carefully choose a satellite density depending on the network altitude to maximize the network coverage performance. 
{\color{black}{We note that the optimal satellite density is obtained by using tools of stochastic geometry, so that this is valid in a spatial average sense. 
}}


\section{Conclusions}
In this paper, we have characterized the coverage performance of satellite networks. Using stochastic geometry tools, we have derived the coverage probability in terms of the satellite density, the Nakagami fading parameter, and the path-loss exponent. From these obtained analytical expressions, we have shown that the LOS propagation is beneficial to reduce the fading effects, while it is less preferred due to high co-channel interference effects. More importantly, we have found that the optimal mean number of satellites for maximizing the coverage probability, and have shown that the optimal number decreases logarithmically with  the satellite altitude.   

Potential follow-up research directions include extending the analysis by considering the distance-dependent path-loss model and the beam misalignment effects, incorporating distinct LOS and NLOS probabilities, which offers more realistic coverage performance. It is also interesting to extend the coverage analysis for multi-tier heterogeneous satellite networks with multi-altitudes, and distinct transmit power levels per network tier.  Another worthwhile direction is the coverage analysis using Poisson line processes, which may better capture the satellite movements on an orbit. 
{\color{black}{Incorporating non-uniform traffics \cite{singh:twc:13} on the earth or more sophisticated beam gain models \cite{jia:iotj:22} are also promising. 
}}

\appendices

\section{Proof of Lemma \ref{lem2}} \label{proof:lem2}
\vspace{-0.1cm}
 We compute the probability that the nearest neighbor of a point ${\bf x}_1\in \Phi$ is larger than $r$ conditioned that at least more than one satellite exists in $\mathcal{A}$. To this end, we calculate the probability that there is no point in $\mathcal{A}_r$ as illustrated in Fig. \ref{fig:geo}, i.e., $\Phi(\mathcal{A}_r)=0$, conditioned $\Phi(\mathcal{A})>0$. Using Bayes' theorem, we compute
 \begin{align}
	& \mathbb{P}\left[R>r| \Phi(\mathcal{A})>0\right] \nonumber \\
	&=\mathbb{P}[\Phi(\mathcal{A}_r)=0|\Phi(\mathcal{A})>0] = \frac{\mathbb{P}[\Phi(\mathcal{A}_r)=0,\Phi(\mathcal{A})>0]}{ \mathbb{P}[\Phi(\mathcal{A})>0]} \nonumber\\
	&\mathop{=}^{(a)} \frac{\mathbb{P}[\Phi(\mathcal{A}_r)=0]\mathbb{P}[\Phi(\mathcal{A}/\mathcal{A}_r)>0]}{ \mathbb{P}[\Phi(\mathcal{A})>0]}\nonumber\\
	&= \frac{\mathbb{P}[\Phi(\mathcal{A}_r)\!=\!0](1\!-\!\mathbb{P}[\Phi(\mathcal{A}/\mathcal{A}_{r})\!=\!0])}{1\!-\!\mathbb{P}[\Phi(\mathcal{A})\!=\!0]} \nonumber\\
	&\mathop{=}^{(b)}\frac{\!\exp\left(-\lambda|\mathcal{A}_r|\right)(1\!-\!\exp\left(-\lambda |\mathcal{A}/\mathcal{A}_r|\!\right))}{1\!-\!\exp\left(\!-\lambda |\mathcal{A}|\!\right)}\nonumber\\
	&=\frac{\exp\left(-\lambda|\mathcal{A}_r|\right) -\exp\left(-\lambda |\mathcal{A}|\right)}{1-\exp\left(-\lambda |\mathcal{A}|\right)}, \label{eq:ccdf_ND}
\end{align}
where (a) follows from the independence of the PPP for non-overlapping areas $\mathcal{A}/\mathcal{A}_r$ and $\mathcal{A}_r$ and (b) is by Lemma 1. To accomplish this, we need to compute the area for $\mathcal{A}_r$. By the Archimedes' Hat-Box Theorem, the area of $\mathcal{A}_r$ is given by
 \begin{align}
 	|\mathcal{A}_r| = 2\pi (R_{\sf S}-R_{\sf E}-h_r)R_{\sf S}. \label{eq:arear}
 \end{align}
 As depicted in Fig. \ref{fig:geo}, we can express $h_r$ in terms of $r$ as
 \begin{align}
 	 h_r = \frac{(R_{\sf S}^2-R_{\sf E}^2)-r^2}{2R_{\sf E}}.\label{eq:hr}
 	  \end{align} 
 Invoking \eqref{eq:hr} into \eqref{eq:arear}, the area of $\mathcal{A}_r$ is represented as \begin{align}
 	|\mathcal{A}_r| = 2\pi \left(R_{\sf S}-R_{\sf E}-\frac{(R_{\sf S}^2-R_{\sf E}^2)-r^2}{2R_{\sf E}}\right)R_{\sf S}.
 \end{align}
As a result, the probability that no satellite exists in $\mathcal{A}_r$ is given by
 \begin{align}
 	&\mathbb{P}\left[\Phi(\mathcal{A}_r)=0\right]=e^{-\lambda |\mathcal{A}_r|} 
 	\nonumber \\
 	&=e^{-2\lambda \pi R_{\sf S}(R_{\sf S}-R_{\sf E})}e^{-\lambda \pi \frac{R_{\sf S}}{R_{\sf E}}\left\{r^2-(R_{\sf S}^2-R_{\sf E}^2))\right\}}. \label{eq:void1}
 \end{align}
 Also, the probability that there is no satellite in $\mathcal{A}$ is computed as
 \begin{align}
 		\mathbb{P}\left[\Phi(\mathcal{A})=0\right]
 		&=e^{-2\lambda \pi R_{\sf S}(R_{\sf S}-R_{\sf E})}.\label{eq:void2}
 \end{align}
 Plugging \eqref{eq:void1} and \eqref{eq:void2} into  \eqref{eq:ccdf_ND}, we obtain the conditional CCDF of $R$
 \begin{align}
 	F_{R|\Phi(\mathcal{A})>0}^c(r)
 	= \frac{e^{-2\lambda \pi R_{\sf S}(R_{\sf S}-R_{\sf E})}\left[e^{-\lambda \pi \frac{R_{\sf S}}{R_{\sf E}}\left\{r^2-(R_{\sf S}^2-R_{\sf E}^2)\right\}} -1\right]}{1-e^{-2\lambda \pi R_{\sf S}(R_{\sf S}-R_{\sf E})}}. 	\end{align}
 By taking derivative with respect to $r$, we obtain the conditional distribution of the nearest satellite distance as
 \begin{align}
 	&f_{R|\Phi(\mathcal{A})>0}(r) = \frac{\partial (1-F_{R|\Phi(\mathcal{A})>0}^c(r))}{\partial r}
\nonumber \\
& = 2\pi \lambda\frac{R_{\sf S}}{R_{\sf E}}  \frac{e^{\lambda \pi \frac{R_{\sf S}}{R_{\sf E}} (R_{\sf S}^2-R_{\sf E}^2) }}{e^{2\lambda \pi R_{\sf S}(R_{\sf S}-R_{\sf E})}-1} re^{-\lambda \pi \frac{R_{\sf S}}{R_{\sf E}}r^2},
 \end{align}
 where $R_{\rm min}\leq r \leq R_{\rm max}$.

\section{Proof of Lemma \ref{lem3}} \label{proof:lem3}
We first define the aggregated interference power conditioned that $\Phi(\mathcal{A})>0$ and the nearest satellite distance from the typical receiver is placed at fixed distance $\|{\bf x}_{1}-{\bf u}_1\|=r$ as 
$I_{r}=\sum_{{\bf x}_i\in \Phi \cap \mathcal{A}_r^c} {\bar G}_iH_i\|{\bf x}_i-{\bf u}_1\|^{-\alpha}$,
where $\mathcal{A}_r^c=\mathcal{A}\setminus \mathcal{A}_r$ denotes the area on the spherical cap outside $\mathcal{A}_r \subset\mathcal{A}$.  We compute the Laplace transform of the aggregated interference power. Such Laplace transform is computed as
\begin{align}
& \mathcal{L}_{I_{r}|\Phi(\mathcal{A})>0}(s) =\mathbb{E}\left[e^{-s I_r} ~ \middle|~ \|{\bf x}_{1}-{\bf u}_1\|=r,  \Phi(\mathcal{A})>0 \right] \label{eq:condLapla} \\
&\stackrel{(a)}{=} \exp\left( -\lambda \int_{v\in \mathcal{A}_r^c}  \left(1-\mathbb{E}\left[e^{-s {\bar G}_iH_iv^{-\alpha}}\right]\right)  ~  {\rm d}v \right)\nonumber\\
&\stackrel{(b)}{=} \exp\left( -\lambda \int_{v\in \mathcal{A}_r^c}  1- \frac{1}{\left(1+\frac{s{\bar G}_iv^{-\alpha}}{m}\right)^{m}}    ~  {\rm d}v \right) 
 \nonumber \\
&\stackrel{(c)}{=} \exp\left(-2\pi \lambda \frac{R_{\sf S}}{R_{\sf E}}\int_{r}^{R_{\rm max} } \left[ 1-   \frac{1}{ \left(1+\frac{s{\bar G}_iv^{-\alpha}}{m}\right)^{m}} \right] v~ {\rm d}v \right) \nonumber\\
&\stackrel{(d)}{=}  \exp    \left(  - \lambda   \pi  \frac{R_{\sf S}}{R_{\sf E}}       \left(   \frac{{\bar G}_is}{m}  \right)^{  \frac{2}{\alpha}}         \int_{\left(   \frac{{\bar G}_is}{m}  \right)^{-\frac{2}{\alpha}}r^2 }^{\left(   \frac{{\bar G}_is}{m}  \right)^{-\frac{2}{\alpha}}R_{\rm max}^2  } 1  -   \frac{1}{\left(1   +    u^{  -\frac{\alpha}{2}}   \right)^m}{\rm d} u       \right)    , \label{eq:LaplaRay}
\end{align}
where (a) follows from the probability generating functional (PGFL) of the PPP \cite{baccelli:tit:06,haenggi:jsac:09}, (b) is because {\color{black}$\sqrt{H_i}$ follows the Nakagami $m$ distribution}, (c) comes from $	\frac{\partial |\mathcal{A}_r|}{\partial r}=2\frac{R_{\sf S}}{R_{\sf E}}\pi r$, and (d) is the change of variable $u=\left(\frac{s{\bar G}_i}{m}\right)^{-\frac{2}{\alpha}}v^2$ {\color{black}{ and ${\rm d}u = 2v \left(\frac{s{\bar G}_i}{m}\right)^{-\frac{2}{\alpha}} {\rm d} v$ accordingly.}} 
Evaluating at $s=r^{\alpha}{  \gamma} $, we obtain 
\begin{align}
 \mathcal{L}_{I_{r}|\Phi(\mathcal{A})>0}(r^{\alpha} \gamma)= 	  \exp  \left(    - \lambda   \pi  \frac{R_{\sf S}}{R_{\sf E}} r^2    \eta(\lambda, \alpha, R_{\sf S}, m ; r,\gamma)     \right),  \label{eq:Laplace}
\end{align}
with 
\begin{align}
 \eta(\lambda, \alpha, R_{\sf S}, m ;r, \gamma) \triangleq	 \left(\!\frac{{\bar G_i}\gamma}{m}\!\right)^{\!\frac{2}{\alpha}}\int_{\left(\!\frac{{\bar G}_i{ \gamma}}{m}\!\right)^{\!\!-\frac{2}{\alpha}} }^{\left(\!\frac{{\bar G}_i{ \gamma}}{m}\!\right)^{\!\!-\frac{2}{\alpha}}\!\left(\!\frac{R_{\rm max}}{r}\!\right)^2  }\!\! 1\!-\! \frac{1}{\left(1 \!+\!  u^{-\frac{\alpha}{2}} \right)^m}{\rm d} u. \nonumber\\
\end{align}
This completes the proof.

 \section{Proof of Theorem \ref{Th1}}
\label{proof:Th1}
{\color{black}{
We first obtain the coverage probability as 
\begin{align}
    &P^{ {\sf cov}}_{{\sf SIR}} (\gamma; \lambda, \alpha,R_{\sf S},m) \nonumber \\
    & = \mathbb{P} [\Phi(\mathcal{A}) = 0] \cdot P^{ {\sf cov}}_{{\sf SIR} | \Phi(\mathcal{A}) = 0} (\gamma; \lambda, \alpha,R_{\sf S},m) + \nonumber \\
    & \mathbb{P} [\Phi(\mathcal{A}) > 0] \cdot P^{ {\sf cov}}_{{\sf SIR} | \Phi(\mathcal{A}) > 0} (\gamma; \lambda, \alpha,R_{\sf S},m).
\end{align}
Since $P^{ {\sf cov}}_{{\sf SIR} | \Phi(\mathcal{A}) = 0} (\gamma; \lambda, \alpha,R_{\sf S},m) = 0$,}} conditioning on $\|{\bf x}_{1}-{\bf u}_1\|=r $ and $\Phi(\mathcal{A})>0$, we compute as 
\begin{align}
&P^{\sf cov}_{{\sf SIR}|\Phi(\mathcal{A})>0} (\gamma;\lambda, \alpha,R_{\sf S}, m) = \nonumber \\
& \mathbb{E}\left[\mathbb{P}\left[  H_1 \geq  {r}^{\alpha}\gamma  I_r  \mid   \Phi(\mathcal{A})>0 , \|{\bf x}_{1}-{\bf u}_1\|=r \right] \mid  \Phi(\mathcal{A})>0 \right], \label{eq:condprob1}
\end{align}
where the expectation is taken over the distribution of $r$ conditioned on $\Phi(\mathcal{A})>0$. Since {\color{black} $\sqrt{H_i}$} is the Nakagami-$m$ random variable, the CCDF of $H_i$ is given by
\begin{align} \label{eq:ccdf_H}
	 \mathbb{P}[H_i\geq x] 	&=  e^{-m x}\sum_{k=0}^{m-1}\frac{(m x)^k}{k!}.
\end{align}
Utilizing this, we can rewrite the coverage probability in \eqref{eq:condprob1} as
\begin{align}
 &P^{ {\sf cov}}_{{\sf SIR}|\Phi(\mathcal{A})>0} (\gamma; \lambda, \alpha,R_{\sf S},m) \nonumber\\
&=\mathbb{E}\Bigg[  \mathbb{E} \Bigg[  \sum_{k=0}^{ m  -1} \frac{ m^k\gamma^k{r}^{\alpha k}}{k!}{ I}_r^k  e^{-m  {r}^{\alpha}\gamma { I}_r  }  \nonumber \\
& \mid  \Phi(\mathcal{A}) > 0 , \|{\bf x}_{1}-{\bf u}_1\|=r \Bigg]  \mid \Phi(\mathcal{A})>0 \Bigg]  \nonumber \\
&\stackrel{(a)}{=} \! \mathbb{E}\left[ \sum_{k=0}^{ m -1}\frac{ m^k \gamma^k {r}^{\alpha k}}{k!} (-1)^{k}\frac{\d^k\mathcal{L}_{{ I}_{r|\Phi(\mathcal{A})>0} } (s)}{\d s^k}  \left|_{s= m\gamma {r}^{\alpha}} \right.   \! \mid\! \Phi(\mathcal{A})>0\right]\nonumber\\
&\stackrel{(b)}{=}\nu(\lambda, R_{\sf S})\!\!\int_{R_{\rm min}}^{R_{\rm max}}  \sum_{k=0}^{ m -1}\frac{ m^k\gamma^k {r}^{\alpha k}}{k!} (-1)^{k}\left.\!{\frac{\d^k\mathcal{L}_{{  I}_{r|\Phi(\mathcal{A})>0} }(s)}{\d s^k}} \right|_{s= m\gamma  {r}^{\alpha}} \nonumber \\
& re^{-\lambda \pi \frac{R_{\sf S}}{R_{\sf E}}r^2} {\rm d} r,
\end{align}
where (a) follows from Lemma \ref{lem3} and applying the derivative property of the Laplace transform, i.e., $\mathbb{E}\left[X^{k} e^{-sX}\right]=(-1)^{k}\frac{\d^k\mathcal{L}_X(s)}{\d s^k}$, and (b) is due to the expectation over the nearest distribution given in Lemma \ref{lem2}. By multiplying $\mathbb{P}[\Phi(\mathcal{A})>0]= 1-e^{-\lambda 2\pi(R_{\sf S}-R_{\sf E})R_{\sf S}  }$, we finally obtain the expression in \eqref{eq:Th1}, which completes the proof.

\section{Proof of Theorem \ref{Th2}}
\label{proof:Th2}

 We devote to proving the upper bound, since the lower bound is readily obtained from the former by choosing $\kappa=1$. Conditioned on the nearest satellite being placed at distance $r$ from the typical receiver's location, the conditional probability can be written as
\begin{align}
&P^{\sf cov}_{{\sf SIR}|\Phi(\mathcal{A})>0} (\gamma;\lambda, \alpha,R_{\sf S}, m) = \nonumber \\
& \mathbb{E}\left[\mathbb{P}\left[  H_1 \geq  {r}^{\alpha}\gamma  I_r  \mid   \Phi(\mathcal{A})>0 , R_1=r \right] \mid  \Phi(\mathcal{A})>0 \right].\label{eq:condprob2}
\end{align}
Recall that the CCDF for $H_1$ can be represented in terms of  the lower incomplete gamma function as
\begin{align}
\mathbb{P}[H_1 > x] 
&= 1- \frac{1}{\Gamma(m)}\int_{0}^{mx} t^{m-1}e^{-t}{\rm d}t.\label{eq:ccdf_H0}
\end{align}
From the Alzer's inequality \cite{alzer:97,lee:twc:15},  the incomplete Gamma function has an expression in the middle sandwiched between two inequalities: 
\begin{align}
	   \left(1-e^{-m \kappa x}\right)^{m}
\leq  \frac{1}{\Gamma(m)}\int_{0}^{mx} t^{m-1}e^{-t}{\rm d}t  \leq     \left(1-e^{-m  x}\right)^{m}.
\end{align}
Using this, the CCDF for $H_1$ is upper and lower bounded by
\begin{align}
  \left(1-e^{-m  x}\right)^{m}
	\leq \mathbb{P}[H_1 > x] \leq   \left(1-e^{-m \kappa x}\right)^{m},
\end{align}
where $\kappa=(m!)^{-\frac{1}{m}}$ and the equality holds when $m=1$.  Applying  the binomial expansion
\begin{align}
	 1-   \left(1-e^{-m \kappa x}\right)^{m} = \sum_{\ell=1}^{m} \binom{m}{\ell}(-1)^{\ell+1 }e^{-  \ell m \kappa x}, \nonumber
\end{align}
and plugging \eqref{eq:ccdf_H0} into \eqref{eq:condprob2}, we obtain an upper bound of the conditional coverage probability as
\begin{align}
&P^{\sf cov}_{{\sf SIR}|\Phi(\mathcal{A})>0} (\gamma;\lambda, \alpha,R_{\sf S}, m) \nonumber \\
& \leq  \sum_{\ell=1}^{m} \binom{m}{\ell}(-1)^{\ell+1 }   \mathbb{E}\left[ \mathbb{E}\left[ e^{-\ell m \kappa r^{\alpha}\gamma I_r } \mid R_1=r, \Phi(\mathcal{A})>0\right] \right. \nonumber \\
& \left. \mid \Phi(\mathcal{A})>0\right]\nonumber\\
&=\sum_{\ell=1}^{m} \binom{m}{\ell}(-1)^{\ell+1 } \mathbb{E}\left[  \mathcal{L}_{I_{r}|\Phi(\mathcal{A})>0}\left(\ell m \kappa  r^{\alpha}\gamma \right)  \! \mid \Phi(\mathcal{A})>0\right],\label{eq:Cov_prob_v1}
\end{align}
where the remaining expectation is taken over the distribution of    $f_{R|\Phi(\mathcal{A})}(r)$ in Lemma \ref{lem2}.  Toward this end, we compute the conditional Laplace transform of the aggregated interference power. From the Lemma \ref{lem3} in \eqref{eq:LaplaRay}, such Laplace transform is 
 \begin{align}
 & \mathcal{L}_{I_{r}|\Phi(\mathcal{A})>0}(\ell m  \kappa r^{\alpha}\gamma)= \nonumber \\
 &  \exp \left(  - \lambda   \pi  \frac{R_{\sf S}}{R_{\sf E}} r^2  \eta(\lambda, \alpha, R_{\sf S}, m ; r,\ell m \kappa \gamma)   \right). \label{eq:Laplace}
\end{align}
 Invoking \eqref{eq:Laplace} into \eqref{eq:Cov_prob_v1} and computing the expectation with respect to the nearest distance distribution given in Lemma \ref{lem2}, we obtain $P^{{\sf{ cov}}, {\rm B}}_{{\sf SIR}|\Phi(\mathcal{A})>0} (\gamma;\lambda, \alpha,R_{\sf S}, m)$. Multiplying  $\mathbb{P}[\Phi(\mathcal{A})>0]= 1-e^{-\lambda 2\pi(R_{\sf S}-R_{\sf E})R_{\sf S}  }$ to the conditional coverage probability, we arrive at the expression in \eqref{eq:Th2}, which completes the proof.

\section{Proof of Corollary \ref{Cor2}}
\label{proof:cor2}
  Using the fact that $\frac{R_{\rm max}}{r} \leq \frac{R_{\rm \max}}{R_{\rm min}}$ for $R_{\rm min}<r<R_{\rm max}$, we first define an upper bound of  $\eta^{\sf U}(x; \alpha, R_{\sf S}, m)$ in \eqref{eq:eta}, which is independent of $r$, as
\begin{align}
 \eta^{\sf U}(x; \alpha, R_{\sf S}, m) =\left( \frac{{\bar G_i}x}{m}\!\right)^{\!\frac{2}{\alpha}}\int_{\left(\!\frac{{\bar G}_i{ x}}{m}\!\right)^{\!\!-\frac{2}{\alpha}} }^{\left(\!\frac{{\bar G}_i{ x}}{m}\!\right)^{\!\!-\frac{2}{\alpha}}\!\left(\!\frac{R_{\rm max}}{R_{\rm min}}\!\right)^2  }\!\! 1\!-\! \frac{1}{\left(1 \!+\!  u^{-\frac{\alpha}{2}} \right)^m}{\rm d} u.\label{eq:integralupper}
\end{align}
Invoking \eqref{eq:integralupper} into \eqref{eq:Laplace} and evaluating at $x=\ell m  r^{\alpha}\gamma$, we obtain a lower bound of the conditional Laplace transform of the aggregated interference power  as
\begin{align}
 \mathcal{L}_{I_{r}|\Phi(\mathcal{A})>0}(\ell m  r^{\alpha}\gamma) \geq	 \exp\!\left( - \lambda   \pi  \frac{R_{\sf S}}{R_{\sf E}}     \eta^{\sf U}(\ell m \gamma  ; \alpha, R_{\sf S}, m) r^2  \!\right).\label{eq:Laplace_upper}
\end{align}
Using \eqref{eq:Laplace_upper}, we obtain a lower bound of $\mathbb{E}\left[  \mathcal{L}_{I_{r}|\Phi(\mathcal{A})>0}\left(\ell m    r^{\alpha}\gamma \right)   \! \mid \Phi(\mathcal{A})>0\right] \mathbb{P}[\Phi(\mathcal{A})>0]$ as
\begin{align}
&	 \mathbb{E}\left[  \mathcal{L}_{I_{r}|\Phi(\mathcal{A})>0}\left(\ell m    r^{\alpha}\gamma \right)   \! \mid \Phi(\mathcal{A})>0\right] \mathbb{P}[\Phi(\mathcal{A})>0]\nonumber\\
	&\geq \mathbb{E}\left[ e^{-\pi  \lambda \frac{R_{\sf S}}{R_{\sf E}} \eta^{\sf U}(\ell m \gamma  ; \alpha, R_{\sf S}, m) r^2 }  \mid \Phi(\mathcal{A})>0 \right]\mathbb{P}[\Phi(\mathcal{A})>0]\nonumber\\
	 &=e^{\lambda  \pi  \frac{R_{\sf S}}{R_{\sf E}} (R_{\sf S} - R_{\sf E})^2 } \left[ \frac{e^{-\lambda \pi \frac{R_{\sf S}}{R_{\sf E}} (1+\eta^{\sf U}(\ell m \gamma  ; \alpha, R_{\sf S}, m)) R_{\rm min}^2 }  }{1+\eta^{\sf U}(\ell m \gamma  ; \alpha, R_{\sf S}, m) } - \right. \nonumber \\
	 & \left.  \frac{ e^{-\lambda \pi \frac{R_{\sf S}}{R_{\sf E}}  ( 1 +\eta^{\sf U}(\ell m \gamma  ; \alpha, R_{\sf S}, m) ) R_{\rm max}^2 } }{1+\eta^{\sf U}(\ell m \gamma  ; \alpha, R_{\sf S}, m) }
	 \right].	\label{eq:Laplace_upper2}
\end{align}
 Invoking \eqref{eq:Laplace_upper2} into \eqref{eq:Cov_prob_v1}, we complete the proof. 

\section{Proof of Theorem \ref{Th3}}
\label{proof:Th3}

For notational simplicity, we let 
\begin{align}
    & a= \pi \frac{R_{\sf S}}{R_{\sf E}} \eta^{\sf U}(\gamma; \alpha, R_{\sf S}, 1 ) R_{\rm min}^2 , \\
    & b= \pi \frac{R_{\sf S}}{R_{\sf E}}\! \left[ \!\left(1\!+\!\eta^{\sf U}(\gamma; \alpha, R_{\sf S}, 1 )\!\right) \!R_{\rm max}^2\!-\!R_{\rm min}^2\!\right], \\ 
    & c=1+\eta^{\sf U}(\gamma; \alpha, R_{\sf S}, 1 ).
\end{align}
Then, the lower bound of the coverage probability in \eqref{eq:cor2}, can be expressed as
  \begin{align}
  	P^{\sf  cov,L}_{{\sf SIR}}  (\gamma ;\lambda, \alpha,R_{\sf S},1)=\frac{e^{-a \lambda } -e^{-b\lambda}}{c},\label{eq:simple}
  \end{align}
  where $0<a<b$ and $0<c$.  From \eqref{eq:simple}, we notice that $P^{\sf  cov,L}_{{\sf SIR}}  (\gamma ;\lambda, \alpha,R_{\sf S},1)$ is a neither convex nor concave function with respect to any positive value of $\lambda$. Therefore,  to verify the existence and the uniqueness of $\lambda^{\star}$, we first need to show that $P^{\sf  cov,L}_{{\sf SIR}}  (\gamma ;\lambda, \alpha,R_{\sf S},1)$ is unimodal function with respect to $\lambda>0$. Toward this end, a manipulation deduces that $P^{\sf  cov,L}_{{\sf SIR}}  (\gamma ;\lambda, \alpha,R_{\sf S},1)$ is increasing on $\lambda \in [0, \lambda^{\star}]$ and decreasing on $[\lambda^{\star}, \infty)$: 
    \begin{align}
  	&\frac{\partial P^{\sf  cov,L}_{{\sf SIR}}  (\gamma ;\lambda, \alpha,R_{\sf S},1)}{\partial \lambda} >0  \Leftrightarrow \frac{-ae^{-a \lambda } +be^{-b\lambda}}{c}>0  \nonumber \\
  	& \Leftrightarrow \lambda > \frac{\ln\left(\frac{b}{a}\right)}{b-a}=\lambda^{\star}. \label{eq:optimal}
  \end{align} 
By virtue of the unimodality of $P^{\sf  cov,L}_{{\sf SIR}}  (\gamma ;\lambda, \alpha,R_{\sf S},1)$, $P^{\sf  cov,L}_{{\sf SIR}}  (\gamma ;\lambda, \alpha,R_{\sf S},1)$ has a unique maximum at $\lambda =\lambda^{\star}$. Plugging the definitions of $a$, $b$, and $c$ into \eqref{eq:optimal}, the optimal density for given network parameters is given by
      	\begin{align}
   		\lambda^{\star} &= \frac{\ln\left(\frac{\left[1+\eta^{\sf U}(\gamma; \alpha, R_{\sf S}, 1 )\right]R_{\rm max}^2-R_{\rm min}^2}{ \eta^{\sf U}(\gamma; \alpha, R_{\sf S}, 1 ) R_{\rm min}^2}\right)}{  \pi \frac{R_{\sf S}}{R_{\sf E}}\left[1+\eta^{\sf U}(\gamma; \alpha, R_{\sf S}, 1 )\right]\left(R_{\rm max}^2-R_{\rm min}^2\right)} \nonumber \\
   		& = \frac{ \ln\left(1 + \frac{2 R_{\sf E}\left[1+\eta^{\sf U}(\gamma; \alpha, R_{\sf S}, 1 )\right]}{ \eta^{\sf U}(\gamma; \alpha, R_{\sf S}, 1 )R_{\rm min}}\right)}{\left[1+\eta^{\sf U}(\gamma; \alpha, R_{\sf S}, 1 )\right]  (2\pi R_{\sf S} R_{\rm min})}. 
   		   	\end{align}
Then we arrive at the expression in \eqref{eq:opt_density}, which completes the proof.

%





\bibliographystyle{IEEEtran}
\bibliography{ref_LEO_PPP}

\end{document}